\documentclass[11pt]{article}
\usepackage{amsmath}
\usepackage{amsfonts}
\usepackage{enumerate}
\usepackage{appendix}%附录
\usepackage{titlesec}%附录 格式
\usepackage{amsthm}
\usepackage[numbers,sort&compress]{natbib}
\usepackage{epsfig}

\usepackage{epstopdf}
\usepackage{graphicx}%表格宏包
\usepackage{pdflscape}     % 横向页面支持
\usepackage{makecell}       % 垂直拆分单元格
\usepackage{float} %设置图片浮动位置的宏包
\usepackage{subfigure} % 插入多图时用子图显示的宏包
\usepackage{booktabs}% 专业表格包
\usepackage{stfloats}
\usepackage{caption}
\usepackage{multirow}
\usepackage[perpage]{footmisc}
\usepackage{mathrsfs}
\usepackage{amssymb}
\usepackage{slashed}
\usepackage{amsmath,amssymb,amsfonts}

\usepackage{textcomp} % 添加对特殊字符的支持
\usepackage{tikz} %绘制图形

\newtheoremstyle{italicstyle}% 名称
  {\topsep}   % 上方间距
  {\topsep}   % 下方间距
  {\itshape}  % 正文字体（斜体）
  {0pt}       % 缩进
  {\bfseries} % 标题字体
  {}          % 标题后的标点（空表示无句点）
  {5pt plus 1pt minus 1pt} % 标题后间距
  {}          % 标题说明

\newtheoremstyle{normalstyle}% 名称
  {\topsep}   % 上方间距
  {\topsep}   % 下方间距
  {\normalfont} % 正文字体（正体）
  {0pt}       % 缩进
  {\bfseries} % 标题字体
  {}          % 标题后的标点（空表示无句点）
  {5pt plus 1pt minus 1pt} % 标题后间距
  {}          % 标题说明

\theoremstyle{italicstyle}
\newtheorem{theorem}{Theorem}
\makeatletter
\newenvironment{theoremprime}[1][]
  {%
    \if\relax\detokenize{#1}\relax
      \edef\currenttheorem{\thetheorem}%
      \addtocounter{theorem}{-1}%
      \let\oldthetheorem\thetheorem%
      \renewcommand{\thetheorem}{\currenttheorem\textquotesingle}%
      \begin{theorem}%
    \else
      \edef\currenttheorem{\thetheorem}%
      \addtocounter{theorem}{-1}%
      \let\oldthetheorem\thetheorem%
      \renewcommand{\thetheorem}{#1\textquotesingle}%
      \begin{theorem}%
    \fi
  }
  {%
    \end{theorem}%
    \renewcommand{\thetheorem}{\oldthetheorem}%
  }
\makeatother
\newtheorem{lemma}{Lemma}

\theoremstyle{normalstyle}
\newtheorem{definition}{Definition}
\makeatletter
\newenvironment{definitionprime}[1][]
  {%
    \if\relax\detokenize{#1}\relax
      \edef\currentdefinition{\thedefinition}%
      \addtocounter{definition}{-1}%
      \let\oldthedefinition\thedefinition%
      \renewcommand{\thedefinition}{\currentdefinition\textquotesingle}%
      \begin{definition}%
    \else
      \edef\currentdefinition{\thedefinition}%
      \addtocounter{definition}{-1}%
      \let\oldthedefinition\thedefinition%
      \renewcommand{\thedefinition}{#1\textquotesingle}%
      \begin{definition}%
    \fi
  }
  {%
    \end{definition}%
    \renewcommand{\thedefinition}{\oldthedefinition}%
  }
\makeatother
\newtheorem{remark}{Remark}

\newtheorem{example}{Example}

\def\whitebox{{\hbox{\hskip 1pt
        \vrule height 6pt depth 1.5pt
        \lower 1.5pt\vbox to 7.5pt{\hrule width
                  3.2pt\vfill\hrule width 3.2pt}%
        \vrule height 6pt depth 1.5pt
        \hskip 1pt } }}

\usepackage{titling}

\pretitle{\begin{flushleft}\LARGE\bfseries}
\posttitle{\par\end{flushleft}}
\preauthor{\begin{flushleft}\large\bfseries}
\postauthor{\par\end{flushleft}}
\predate{\begin{flushleft}\normalsize}
\postdate{\par\end{flushleft}}

\date{}
\title{Geometric construction of $k$-optimal locally repairable codes\thanks{Supported by Science Research Project of Hebei Education Department under Grant JCZX2026032, Natural Science Foundation of Hebei Province under Grant A2025205023 and A2023205045.}}
\author{{Yi Fu\textsuperscript{a}, \ Xiuling Shan\textsuperscript{abc}\thanks{Corresponding author: X. Shan, xiulingshan@hebtu.edu.cn}   }  \\
\vspace{0.2cm}
\scriptsize{\textsuperscript{a}School of Mathematical Sciences, Hebei Normal University, Shijiazhuang 050024, China}\\
\textsuperscript{b}Hebei Research Center of the Basic Discipline Pure Mathematics, Shijiazhuang 050024, China \\
\textsuperscript{c}Hebei Key Laboratory of Computational Mathematics and Applications, Shijiazhuang 050024, China }

\begin{document}

\maketitle

\baselineskip .5truecm

%\begin{abstract}
\noindent {\bf Abstract } \\
A linear code is referred to as a locally repairable code (LRC) with locality $r$ if any erased code symbol can be recovered by accessing at most $r$ other code symbols. LRCs are highly desirable for distributed storage systems to enhance repair efficiency. In this paper, we investigate LRCs with disjoint repair sets via the parity-check matrix method. Firstly, we propose a novel concept of the $s$-Pasch configuration and present a geometric characterization for the existence of LRCs with minimum distance $5$ and locality $3$. Subsequently, we construct $k$-optimal LRCs by exploiting the point-line relationship in $PG(2,q)$. Finally, a family of $q$-ary $k$-optimal LRCs with minimum distance $6$ and general locality $r$ is constructed using partial $r$-spreads.

\vspace{0.2cm}

\noindent {\bf Keywords }  Locally repairable code $\cdot$  $s$-Pasch configuration $\cdot$ partial $r$-spread

%\end{abstract}

\baselineskip 15pt

\section{Introduction}

Nowadays, large-scale cloud storage and distributed file
systems have scaled to such massive proportions that disk failures have become the norm. To safeguard data against disk failures, the most straightforward solution is replication. But this incurs substantial storage overhead. As an alternative, $[n,k]$ storage codes have thus been introduced: they encode $k$ information symbols into $n$ coded symbols, which are then distributed across $n$ disks. Among them, the maximum distance separable (MDS) codes are preferred, as they yield dramatic improvements in both redundancy and reliability. However, MDS codes require accessing $k$ surviving nodes for recovery, which becomes highly inefficient in large-scale systems with large $k$.

To balance storage efficiency and access performance, Huang et al. \cite{HCL} studied codes that minimized repair-node participation. Subsequently, Gopalan et al. \cite{GHSY} introduced codes with locality. Specifically, the $i$-th coordinate of a code is said to have \emph{locality $r$} if it can be recovered by accessing at most $r$ other coordinates. An $[n,k,d]$ linear code $\mathcal{C}$ is referred to as a \emph{locally repairable code} (LRC) \emph{with locality $r$} if each symbol of $\mathcal{C}$ has locality $r$, which is denoted as an $(n,k,d;r)$-LRC. When $r\ll k$, this type of code significantly reduces the disk I/O complexity during the repair process.

For $q$-ary $(n,k,d;r)$-LRCs, Gopalan et al. \cite{GHSY} first established the Singleton-like bound
\begin{equation}\label{d}
d\leq n-k - \left\lceil \frac{k}{r}\right\rceil+2.
\end{equation}
An LRC achieving the equality in \eqref{d} for given $n$, $k$ and $r$ is called \emph{$d$-optimal}. Notably, for $q$-ary $d$-optimal $(n,k,d;r)$-LRCs with fixed $q$, $r$ and $d \geq 5$, the code rate is strictly less than $\frac{r}{r+1}$. The construction of $d$-optimal LRCs has been extensively studied, as evidenced by \cite{GHSY,GXY,FFCX,FTFC,CFXH,TB,JMX,LMX,LXY,HXSC}.

In order to handle multiple erasures, LRCs have been extended through two main approaches. The first approach, introduced by Prakash et al. \cite{PKLK}, employs a single repair set tolerating multiple erasures. Numerous results related to this case are presented in \cite{TFWF,QFF,FF,KWG,QZF,TZ}. Wang et al. \cite{WZ} proposed the second approach that employs disjoint repair sets, a concept termed availability in \cite{RPDV}, to provide parallel read capability. Subsequent developments dedicated to this case include works such as \cite{CCFT,TZSP,HHCY,JC,CMST,ZK,ZLS}.

In \cite{CM}, Cadambe and Mazumdar derived the alphabet-size-dependent C-M bound for $q$-ary LRCs
\begin{equation}\label{cm}
k\leq \min_{t\in\mathbb Z^+}\left[tr+k_{opt}^{(q)}(n-t(r+1),d)\right],
\end{equation}
where $k_{opt}^{(q)}(n,d)$ denotes the maximum dimension of an $[n,k,d]$ code over $\mathbb F_q$. They also showed binary simplex codes can attain the C-M bound, and subsequent works \cite{SZ,LYRF} constructed binary LRCs achieving the bound \eqref{cm}. Unfortunately, it is generally difficult to determine the value of $k_{opt}^{(q)}(n,d)$, which limits the practical application of the C-M bound.

Recently, Wang et al. \cite{WZL} have derived a Hamming-like bound via the sphere-packing method for $(n,k,d;r)$-LRCs where $r+1$ divides $n$. An $(n,k,d;r)$-LRC is referred to as \emph{$k$-optimal} if $k$ achieves the maximum integer allowed by this bound. Ma et al. \cite{MG} proposed a class of $k$-optimal binary LRCs with $d=6$, $r=2^b$, which included some results given in \cite{WZL}, and two classes for $d=6$, $r=3$ by partial spreads. Using Hamming codes, Kim et al. \cite{KN1,KN2} constructed two classes of binary $k$-optimal LRCs with $d\geq6$ and $r =2,3$. By employing $q$-Steiner systems, sunflowers in projective geometry and difference sets in finite fields, Fang et al. \cite{FCX} developed two constructions of perfect LRCs with $d=6, r=5$ and $d=5, r=2$. They also constructed two families of $k$-optimal LRCs with $d=6, r=2$ and one family with $d=6$ and general locality $r$. Zhang et al. \cite{ZLW,ZTYL} derived two classes of binary $k$-optimal LRCs with $d=6$, $r =11$ and $d\geq6$, $r=2^b(b \geq 3)$ by intersection subspaces, and two classes of ternary $k$-optimal LRCs with $d=6$, $r=3, 5$ using the Kronecker product of two matrices.

This paper focuses on $q$-ary $(n,k,d;r)$-LRCs with disjoint repair sets. An outline of this paper is as follows. In Section 2, we present preliminary concepts related to locally repairable codes and finite projective geometry. In Section 3, we propose a special Pasch configuration, which we denote as the strong Pasch configuration ($s$-Pasch configuration), by exploiting the point-line relationship in $PG(u-1,q)$. Based on the $s$-Pasch configuration, we further provide a geometric characterization for the existence of $q$-ary LRCs with $d\geq5$ and $r=3$. Moreover, we construct two classes of $q$-ary $k$-optimal LRCs with $d=5$ and $r=3$ in $PG(2,q)$. In Section 4, we construct a class of $q$-ary $k$-optimal LRCs with $d=6$ and general locality $r$ by utilizing the structure of partial $r$-spreads. Notably, the coding rates of these LRCs are asymptotically optimal. The concluding remarks are presented in the final section.

\section{Preliminaries}

In this section, we introduce some definitions about locally repairable codes and finite projective geometry. Firstly, we give some notations in this paper.

$\bullet$ For a positive integer $n$, let $[n]=\{1,2,\cdots,n\}$ and let $\binom{[n]}{s}$ be the set of all the $s$-subsets of $[n]$;

$\bullet$ For a prime power $q$, let $\mathbb F_{q}$ be a finite field with $q$ elements, and $\mathbb{F}_{q}^* =\mathbb{F}_{q} \setminus \{0\}$;

$\bullet$ For a positive integer $n$, let $\mathbb{F}_{q}^{n}$ be the $n$-dimensional vector space over $\mathbb F_{q}$;

$\bullet$ For a vector $\boldsymbol{x}=(x_{1},x_{2},\cdots,x_{n})$ in $\mathbb{F}_{q}^{n}$, its support is defined as $\mathrm{supp}(\boldsymbol{x})=\{i\mid x_{i}\neq0,i\in[n]\}$, while its Hamming weight is given by $w(\boldsymbol{x})=|\text{supp}(\boldsymbol{x})|$;

$\bullet$ For a set $\{\boldsymbol{x_i}\mid i\in[l]\}$ consisting of pairwise distinct vectors in $\mathbb{F}_{q}^{n}$, let $\mathrm{Span}\{\boldsymbol{x_i}\mid i\in[l]\}$ be the vector space spanned by $\{\boldsymbol{x_i}\mid i\in[l]\}$;

$\bullet$ A $q$-ary $[n,k,d]$ linear code $\mathcal{C}$ over $\mathbb{F}_{q}$ is a $k$-dimensional subspace of $\mathbb{F}_{q}^{n}$ with minimum distance $d$. The dual code of $\mathcal{C}$ is $\mathcal{C^\bot}=\{u\in\mathbb{F}_{q}^{n}\mid u\cdot v=0, \text{for any}~v\in\mathcal{C}\}$.

\subsection{Locally repairable codes}

This subsection introduces the concept of locally repairable codes and their related properties.

\begin{definition}{\rm\cite{FCX}}\label{1.1}
Let $\mathcal{C}$ be a $q$-ary $[n,k,d]$ linear code. Then $\mathcal{C}$ is called a \emph{locally repairable code} (LRC) \emph{with locality r} if for any~$i\in [n]$, there exists a subset $R_{i}\subset [n]\setminus\{i\}$ with $|R_{i}|\leq r$ such that the $i$-th symbol $c_{i}$ can be represented as a linear combination of ${\{c_{j}\}}_{j\in R_{i}}$, i.e., ${c_{i}}=\sum_{j\in R_{i}} a_{j}c_{j}$, for some ${a_{j}}\in \mathbb{F}_{q}$. We denote such a code as an $(n,k,d;r)$-LRC. The set $R_{i}$ is called a \emph{repair set} for $c_i$.
\end{definition}

Locally repairable codes are typically constructed using generator matrices or parity-check matrices. In the parity-check matrix approach, an equivalent definition can be formulated using the dual code.

\begin{definitionprime}{\rm\cite{WZL}}\label{lrc}
Let $\mathcal{C}$ be a $q$-ary $[n,k,d]$ linear code. Then $\mathcal{C}$ is called a \emph{locally repairable code} (LRC) \emph{with locality r} if for each coordinate $i\in [n]$, there exists a parity-check $\boldsymbol h_{i}\in\mathcal{C^\bot}$ such that $i\in \mathrm{supp}(\boldsymbol h_{i})$ and $w(\boldsymbol h_{i})\leq r+1$. We denote such a code as an $(n,k,d;r)$-LRC. Particularly, a parity-check $\boldsymbol h_{i}$ with $w(\boldsymbol h_{i})\leq r+1$ is called a \emph{local parity-check}. The set $\mathrm{supp}(\boldsymbol h_{i})\setminus\{i\}$ is called a \emph{repair set} for $c_i$.
\end{definitionprime}

\begin{definition}{\rm\cite{WZL}}\label{repair}
Assume that $(r+1)$ divides $n$ and let $l=\frac{n}{r+1}$. An $(n,k,d;r)$-LRC $\mathcal{C}$ is said to have \emph{disjoint repair sets} if there exists local parity-checks $\boldsymbol h_{j_1},\boldsymbol h_{j_2},\cdots,\boldsymbol h_{j_l}\in\mathcal{C^\bot}$ satisfying $w(\boldsymbol h_{j_i})= r+1$ for $i\in [l]$ and $ \mathrm{supp}(\boldsymbol h_{j_i}) \cap \mathrm{supp}(\boldsymbol h_{j_{i'}})=\emptyset$ for~$1\leq i\neq i' \leq l$.
\end{definition}

If $\mathcal{C}$ is a $q$-ary $(n,k,d;r)$-LRC with disjoint repair sets, then the dual space $\mathcal{V}$ of the linear space generated by the local parity-checks $\boldsymbol h_{1},\boldsymbol h_{2},\cdots,\boldsymbol h_{l}$ is called the \emph{$\mathcal{L}$-space} of $\mathcal{C}$. Let $u=n-k-l$. Without loss of generality, we assume that $\mathcal{C}$ has a \emph{parity-check matrix $H$} as the following form
{\small
\begin{equation*}
H=
\left(\begin{array}{cccc|cccc|c|cccc}
1 & 1 & \cdots & 1 & 0 & 0 & \cdots & 0 & \cdots & 0 & 0 & \cdots & 0\\
0 & 0 & \cdots & 0 & 1 & 1 & \cdots & 1 & \cdots & 0 & 0 & \cdots & 0\\
\vdots & \vdots & \ddots & \vdots & \vdots & \vdots & \ddots & \vdots & \cdots & \vdots & \vdots & \ddots & \vdots \\
0 & 0 & \cdots & 0 & 0 & 0 & \cdots & 0 & \cdots & 1 & 1 & \cdots & 1\\
\hline
\boldsymbol v_{0}^{(1)} & \boldsymbol v_{1}^{(1)} & \cdots & \boldsymbol v_{r}^{(1)} & \boldsymbol v_{0}^{(2)} & \boldsymbol v_{1}^{(2)} & \cdots & \boldsymbol v_{r}^{(2)} & \cdots & \boldsymbol v_{0}^{(l)} & \boldsymbol v_{1}^{(l)} & \cdots & \boldsymbol v_{r}^{(l)} \\
\end{array}\right)
=
\left(\begin{array}{c}
H_1\\
\hline
H_2\\
\end{array}\right),
\end{equation*}
}where $H$ is an $(n-k)\times n$ matrix. The upper part (denoted by $H_1$) contains $l$ local parity-checks, which ensure that $\mathcal{C}$ has disjoint repair sets. Clearly, $H_1$ is the parity-check matrix of the $\mathcal{L}$-space $\mathcal{V}$ of $\mathcal{C}$, which implies that $\mathcal{C}$ is a subcode of the $\mathcal{L}$-space $\mathcal{V}$. The lower part (denoted by $H_2$) contains vectors $\boldsymbol v_{j}^{(i)} \in \mathbb{F}_{q}^{u}$ for $i\in [l]$, $0\leq j\leq r$. Under this representation of the parity-check matrix, the $l$ disjoint repair sets of $\mathcal{C}$ are $R_i=\{t\mid (i-1)(r+1)+1\leq t\leq i(r+1)\}$, $1\leq i\leq l$.

The matrix $H$ can also be represented in column vector form as
$$H=(\boldsymbol{h}_{1,0},\boldsymbol{h}_{1,1},\cdots,\boldsymbol {h}_{1,r},\cdots,\boldsymbol{h}_{l,0},\boldsymbol{h}_{l,1},\cdots,\boldsymbol {h}_{l,r}),$$
where $\boldsymbol{h}_{i,j}=\left(
 \begin{array}{c}
 \boldsymbol{e}_{i} \\
 \boldsymbol v_{j}^{(i)} \\
 \end{array}
 \right)\in \mathbb{F}_{q}^{l+u}$, with $\boldsymbol{e}_{i}$ being the $l$-dimensional column vector in $\mathbb{F}_{q}$ whose $i$-th component is $1$ and all other components are $0$, for $i\in [l]$, $0\leq j\leq r$.

The following lemma presents several properties of the parity-check matrix for linear codes, which plays a critical role in determining the minimum distance of such codes.

\begin{lemma}{\rm\cite{LX}}\label{hd}
Let $H$ be a parity-check matrix of a $q$-ary $[n,k]$ linear code $\mathcal{C}$. Then

$(1)$ $\mathcal{C}$ has minimum distance $\geq d$ if and only if any $d-1$ columns of $H$ are linearly independent;

$(2)$ $\mathcal{C}$ has minimum distance $\leq d$ if and only if $H$ has $d$ linearly dependent columns;

$(3)$ $\mathcal{C}$ has minimum distance $d$ if and only if any $d-1$ columns of $H$ are linearly independent and some $d$ columns are linearly dependent.
\end{lemma}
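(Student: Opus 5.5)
The key observation is that the code is the kernel of $H$. For a vector $\boldsymbol{c}\in\mathbb{F}_q^n$ we have $\boldsymbol{c}\in\mathcal{C}$ if and only if $H\boldsymbol{c}^T=\boldsymbol{0}$. Writing $H=(\boldsymbol{h}_1,\dots,\boldsymbol{h}_n)$ by columns, this condition reads $\sum_{i\in\mathrm{supp}(\boldsymbol{c})}c_i\boldsymbol{h}_i=\boldsymbol{0}$. So every nonzero codeword of weight $w$ gives a nontrivial linear dependence among the $w$ columns indexed by its support, with all coefficients nonzero. Conversely, a nontrivial dependence $\sum_{i\in S}a_i\boldsymbol{h}_i=\boldsymbol{0}$ on a set $S$ of columns gives a nonzero codeword $\boldsymbol{c}$ with $c_i=a_i$ for $i\in S$ and $c_i=0$ otherwise. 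This codeword has weight $|\{i\in S: a_i\neq 0\}|\le |S|$. I will prove (2) first, then derive (1) from (2), and finally combine them to get (3).

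\textbf{Part (2).} Suppose $\mathcal{C}$ has minimum distance at most $d$. Then there is a nonzero codeword $\boldsymbol{c}$ with $w(\boldsymbol{c})\le d$, and its support columns are linearly dependent. We may assume $n\ge d$, the implicit setting in which $d$ columns can be chosen. Adding arbitrary further columns to reach exactly $d$ columns keeps the set linearly dependent. Conversely, suppose some $d$ columns are linearly dependent. The codeword constructed above is nonzero and has weight at most $d$, so the minimum distance is at most $d$.

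\textbf{Parts (1) and (3).} Part (1) is the contrapositive of (2) applied with $d-1$ in place of $d$. The minimum distance is $\ge d$ exactly when it is not $\le d-1$. By (2), that holds exactly when no $d-1$ columns are dependent, that is, when every set of $d-1$ columns is linearly independent. For (3), minimum distance exactly $d$ means both $\ge d$ and $\le d$, and (1) and (2) translate these two conditions directly.

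The only step that needs care is the padding argument in (2). A dependence may involve fewer than $d$ columns, and some coefficients may vanish. This is handled by monotonicity: any superset of a dependent set is dependent, and a dependent set yields a codeword supported inside it. No other step presents a real obstacle.
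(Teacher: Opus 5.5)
Your proof is correct and is essentially the standard argument from the cited source \cite{LX}; the paper itself gives no proof and only cites it. The key step, in both your proposal and \cite{LX}, is that a nonzero codeword of weight $w$ is the same thing as a nontrivial dependence among the $w$ columns indexed by its support, and (1) and (3) then follow from (2); your caveat that the padding step needs $n\ge d$ is a fair remark about the statement as written.
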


In~\cite{WZL}, Wang et al. derived a Hamming-like upper bound on the dimension $k$ of locally repairable codes using the sphere-packing method.

\begin{lemma}{\rm\cite{WZL}}\label{sphere}
Let $\mathcal{C}$ be a $q$-ary $(n,k,d;r)$-LRC. Then $\mathcal{C}$ satisfies
\begin{equation}\label{2}
k\leq \dim(\mathcal{V})- \log_q\big(B_{\mathcal{V}}(\lfloor \frac{d-1}{2}\rfloor )\big),
\end{equation}
where $B_{\mathcal{V}}(\left\lfloor \frac{d-1}{2}\right\rfloor) = \left| \left\{ \boldsymbol{v} \in \mathcal{V} : w(\boldsymbol{v}) \leq \left\lfloor \frac{d-1}{2}\right\rfloor \right\} \right|$, and $\mathcal{V}$ is the $\mathcal{L}$-space of $\mathcal{C}$.
\end{lemma}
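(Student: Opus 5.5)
The plan is the standard sphere-packing argument, carried out inside the $\mathcal{L}$-space $\mathcal{V}$ rather than in all of $\mathbb{F}_q^n$. Since $H_1$ is a parity-check matrix of $\mathcal{V}$ and its rows are rows of $H$, every codeword of $\mathcal{C}$ satisfies the local parity-checks. Hence $\mathcal{C}\subseteq\mathcal{V}$ is a subspace, and $|\mathcal{C}|=q^k$. Put $t=\lfloor\frac{d-1}{2}\rfloor$. For each $\boldsymbol c\in\mathcal{C}$, consider the set
$$S(\boldsymbol c)=\{\boldsymbol c+\boldsymbol v\mid \boldsymbol v\in\mathcal{V},\ w(\boldsymbol v)\le t\}.$$

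First, I check that $S(\boldsymbol c)\subseteq\mathcal{V}$. This holds because $\mathcal{V}$ is a linear space containing both $\boldsymbol c$ and $\boldsymbol v$. Next, translation by $\boldsymbol c$ is a bijection, so $|S(\boldsymbol c)|=B_{\mathcal{V}}(t)$.

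The key step is that the sets $S(\boldsymbol c)$ are pairwise disjoint. Suppose $\boldsymbol c+\boldsymbol v=\boldsymbol c'+\boldsymbol v'$ with $\boldsymbol c\neq\boldsymbol c'$. Then $\boldsymbol c-\boldsymbol c'=\boldsymbol v'-\boldsymbol v$ is a nonzero codeword. Its weight is at most $w(\boldsymbol v)+w(\boldsymbol v')\le 2t\le d-1$, which contradicts the minimum distance $d$ of $\mathcal{C}$.

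Counting then gives
$$q^k\,B_{\mathcal{V}}(t)=\Big|\bigcup_{\boldsymbol c\in\mathcal{C}}S(\boldsymbol c)\Big|\le|\mathcal{V}|=q^{\dim(\mathcal{V})}.$$
Taking $\log_q$ of both sides yields \eqref{2}.

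No step here is a real obstacle. The only point needing care is to use balls restricted to $\mathcal{V}$, and not Hamming balls in $\mathbb{F}_q^n$, so that the union stays inside $\mathcal{V}$. That restriction is exactly where the locality structure sharpens the usual Hamming bound.
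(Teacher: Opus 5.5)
Your argument is correct and is the standard sphere-packing proof inside $\mathcal{V}$ that the paper attributes to \cite{WZL}; the paper itself only cites the lemma and gives no proof. The two ingredients are that $\mathcal{C}\subseteq\mathcal{V}$, since the local parity-checks lie in $\mathcal{C}^{\perp}$, and that the translates $\boldsymbol c+\{\boldsymbol v\in\mathcal{V}: w(\boldsymbol v)\le\lfloor\frac{d-1}{2}\rfloor\}$ are pairwise disjoint subsets of $\mathcal{V}$, which gives $q^k B_{\mathcal{V}}(\lfloor\frac{d-1}{2}\rfloor)\le q^{\dim(\mathcal{V})}$ exactly as you wrote.
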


Under the assumption of disjoint repair sets, Wang et al.~\cite{WZL} derived a more explicit expression for bound~\eqref{2} based on properties of the $\mathcal{L}$-space. Furthermore, Fang et al.~\cite{FCX} introduced the terms \emph{perfect LRC} and \emph{$k$-optimal LRC} to denote codes achieving this bound.

\begin{lemma}{\rm\cite{FCX}}{\rm\cite{WZL}}\label{bound1}
Let $\mathcal{C}$ be a $q$-ary $(n,k,d;r)$-LRC with disjoint repair sets. If $l=\frac{n}{r+1}$, then $\mathcal{C}$ satisfies
\begin{equation}\label{3}
q^k \leq \frac{q^{\frac{rn}{r+1}}}{\sum\limits_{0\le i_1+\cdots+i_l \le \left\lfloor \frac{d-1}{2} \right\rfloor} \prod\limits_{j=1}^l \beta(r,i_j)},
\end{equation}
where $\beta(r,i)=\frac{1}{q}\left[(q-1)^i+(-1)^i(q-1)\right]\binom{r+1}{i}$. Moreover, $\mathcal{C}$ is called a \emph{perfect LRC}, if it achieves the bound \eqref{3} with equality.
\end{lemma}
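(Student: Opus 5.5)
The plan is to derive \eqref{3} from Lemma~\ref{sphere}. This needs two ingredients: $\dim(\mathcal{V})$ and $B_{\mathcal{V}}(\lfloor\frac{d-1}{2}\rfloor)$.

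\textbf{Step 1: the dimension.} Since $\mathcal{C}$ has disjoint repair sets, the $\mathcal{L}$-space $\mathcal{V}$ has parity-check matrix $H_1$. After scaling coordinates we may take each local parity-check to have all nonzero entries equal to $1$. Scaling is a monomial equivalence and preserves Hamming weights, so it changes neither side of \eqref{2}. The $l$ rows of $H_1$ have disjoint supports and are therefore linearly independent. Hence
\[
\dim(\mathcal{V})=n-l=\tfrac{rn}{r+1}.
\]

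\textbf{Step 2: the structure of $\mathcal{V}$.} With this normalization,
\[
\mathcal{V}=\mathcal{P}_1\times\cdots\times\mathcal{P}_l,
\]
where each $\mathcal{P}_j$ is the $[r+1,r]$ zero-sum (parity) code on the block $R_j$. Weight is additive over blocks. Therefore
\[
B_{\mathcal{V}}(t)=\sum_{i_1+\cdots+i_l\le t}\ \prod_{j=1}^l A(r,i_j),
\qquad t=\left\lfloor\tfrac{d-1}{2}\right\rfloor,
\]
where $A(r,i)$ is the number of weight-$i$ vectors in $\mathbb{F}_q^{r+1}$ whose coordinates sum to zero.

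\textbf{Step 3: counting $A(r,i)$.} This is the only computational step. Fix a support of size $i$; there are $\binom{r+1}{i}$ choices. Let $N_i$ be the number of $i$-tuples over $\mathbb{F}_q^*$ summing to $0$. Conditioning on whether the first $i-1$ entries sum to zero gives the recursion
\[
N_i=(q-1)^{i-1}-N_{i-1},\qquad N_0=1 .
\]
Its solution is
\[
N_i=\tfrac{1}{q}\left[(q-1)^i+(-1)^i(q-1)\right],
\]
which one checks by induction: $i=0$ gives $\tfrac1q\cdot q=1$, and the recursion is preserved. The same formula also follows from character sums, $N_i=\tfrac1q\sum_{\chi}\bigl(\sum_{x\neq0}\chi(x)\bigr)^i$. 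Thus $A(r,i)=\beta(r,i)$.

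\textbf{Step 4: conclusion.} Substitute Steps 1--3 into \eqref{2} and exponentiate base $q$. This gives exactly \eqref{3}, and the definition of a perfect LRC is just the equality case. I expect no real obstacle here. The only points needing care are the normalization of the local parity-checks to all-ones vectors and the closed form for $N_i$.
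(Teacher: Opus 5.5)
Your proposal is correct and follows the route the paper indicates. The paper gives no proof of its own: it cites \cite{WZL,FCX} and presents \eqref{3} as the explicit form of Lemma~\ref{sphere} obtained from the structure of the $\mathcal{L}$-space, which is exactly your computation of $\dim(\mathcal{V})=n-l$ together with $B_{\mathcal{V}}$ as a blockwise product of weight-$i$ counts in the $[r+1,r]$ zero-sum code. Your recursion and closed form for $N_i$ (equivalently, the additive-character sum) are correct, so $A(r,i)=\beta(r,i)$ and \eqref{3} follows by exponentiating \eqref{2}.
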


Since $k$ must be an integer, bound~\eqref{3} directly deduces an upper bound on $k$.

\begin{lemma}{\rm\cite{FCX}}{\rm\cite{WZL}}\label{bound2}
Let $\mathcal{C}$ be a $q$-ary $(n,k,d;r)$-LRC with disjoint repair sets. If $l=\frac{n}{r+1}$, then $\mathcal{C}$ satisfies
\begin{equation}\label{4}
k\leq \frac{rn}{r+1} - \left\lceil \log_q\left(\sum\limits_{0\le i_1+\cdots+i_l \le \left\lfloor \frac{d-1}{2} \right\rfloor} \prod\limits_{j=1}^l \beta(r,i_j)\right)\right\rceil,
\end{equation}
where $\beta(r,i)=\frac{1}{q}\left[(q-1)^i+(-1)^i(q-1)\right]\binom{r+1}{i}$. Moreover, $\mathcal{C}$ is called a \emph{$k$-optimal LRC}, if it achieves the bound \eqref{4} with equality.
\end{lemma}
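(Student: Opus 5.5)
The bound~\eqref{4} follows from Lemma~\ref{bound1} by taking logarithms and using integrality, so the real content is the derivation of~\eqref{3} from Lemma~\ref{sphere}. I will first make $\dim(\mathcal{V})$ and $B_{\mathcal{V}}(\lfloor\frac{d-1}{2}\rfloor)$ explicit, then substitute, and finally pass to the integer bound.

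\emph{Dimension of $\mathcal{V}$.} With the parity-check matrix $H$ written as above, $\mathcal{V}$ is the kernel of $H_1$. The rows of $H_1$ are the $l$ local parity-checks with pairwise disjoint supports, so they are linearly independent, and $\dim(\mathcal{V})=n-l=\frac{rn}{r+1}$.

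\emph{Counting small-weight vectors.} A vector $\boldsymbol v\in\mathbb{F}_q^n$ lies in $\mathcal{V}$ exactly when, on each block $R_j$ of size $r+1$, its coordinates sum to $0$. These conditions are independent across blocks, so $\mathcal{V}$ is the direct sum of $l$ copies of the $[r+1,r]$ zero-sum code, and the weight of $\boldsymbol v$ is the sum of its block weights. Hence
\[
B_{\mathcal{V}}(t)=\sum_{0\le i_1+\cdots+i_l\le t}\ \prod_{j=1}^l A_{i_j},
\]
where $A_i$ is the number of vectors in $\mathbb{F}_q^{r+1}$ of weight exactly $i$ whose coordinates sum to $0$. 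It remains to show $A_i=\beta(r,i)$.

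\emph{Computing $A_i$.} First choose the support, in $\binom{r+1}{i}$ ways. It then suffices to count $N_i$, the number of $(x_1,\dots,x_i)\in(\mathbb{F}_q^*)^i$ with $x_1+\cdots+x_i=0$. Conditioning on whether $x_1+\cdots+x_{i-1}$ is zero gives the recursion
\[
N_i=(q-1)^{i-1}-N_{i-1},\qquad N_1=0.
\]
Solving it, or checking the closed form by induction, yields
\[
N_i=\frac{1}{q}\big[(q-1)^i+(-1)^i(q-1)\big].
\]
Here $N_0=1$ is consistent with the formula. This matches $\beta(r,i)$ exactly.

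\emph{Conclusion.} Substituting $\dim(\mathcal{V})$ and $B_{\mathcal{V}}$ into~\eqref{2} and exponentiating gives~\eqref{3}. Taking $\log_q$ gives
\[
k\le \frac{rn}{r+1}-\log_q(\Sigma),
\]
where $\Sigma$ denotes the sum in~\eqref{3}. Since $k$ and $\frac{rn}{r+1}$ are integers, we may replace $\log_q(\Sigma)$ by its ceiling, which is exactly~\eqref{4}.

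The only step needing care is the count $N_i$. The rest, namely block independence and integrality, is formal.
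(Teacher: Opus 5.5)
Your proposal is correct, and its last step is exactly the paper's argument: the paper obtains \eqref{4} from the bound \eqref{3} of Lemma~\ref{bound1} purely by integrality of $k$ and of $\frac{rn}{r+1}=n-l$. The only difference is that you also derive \eqref{3} from Lemma~\ref{sphere}, by computing $\dim(\mathcal{V})=n-l$ and the blockwise weight distribution $\beta(r,i)$ of the $[r+1,r]$ zero-sum code (your recursion for $N_i$ is right), whereas the paper simply cites \eqref{3} from \cite{WZL,FCX}.
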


Clearly, the perfect LRC is a special case of the $k$-optimal LRC.

Furthermore, Fang et al.~\cite{FCX} calculated the bounds in Lemmas~\ref{bound1} and~\ref{bound2} for the case where the minimum distance is $5$ or $6$.

\begin{lemma}{\rm\cite{FCX}}\label{pb}
Let $\mathcal{C}$ be a $q$-ary $(n,k,d;r)$-LRC with disjoint repair sets. If $d=5$ or $6$, then
\begin{equation}\label{5}
q^k \leq \frac{q^{\frac{rn}{r+1}}}{\frac{rn}{2}(q-1)+1}
\end{equation}
and
\begin{equation}\label{6}
k\leq \frac{rn}{r+1} - \left\lceil \log_q\left(\frac{rn}{2}(q-1)+1\right)\right\rceil.
\end{equation}
\end{lemma}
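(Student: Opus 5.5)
Since $d=5$ and $d=6$ both give $\lfloor \frac{d-1}{2}\rfloor=2$, I will specialize Lemma~\ref{bound1} to this value and evaluate the denominator
$$S=\sum_{0\le i_1+\cdots+i_l\le 2}\ \prod_{j=1}^l\beta(r,i_j)$$
explicitly. Then \eqref{5} is \eqref{3} with $S$ replaced by its closed form, and \eqref{6} follows from \eqref{5} by taking $\log_q$ and using that $k$ is an integer, exactly as Lemma~\ref{bound2} is obtained from Lemma~\ref{bound1}.

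The first step is to compute the three values of $\beta$ that can occur:
\begin{itemize}
\item $\beta(r,0)=\frac{1}{q}[1+(q-1)]=1$;
\item $\beta(r,1)=\frac{1}{q}[(q-1)-(q-1)](r+1)=0$;
\item $\beta(r,2)=\frac{1}{q}[(q-1)^2+(q-1)]\binom{r+1}{2}=(q-1)\binom{r+1}{2}$.
\end{itemize}
These values match the structure of the $\mathcal{L}$-space. Its vectors restricted to one repair block are exactly the vectors whose coordinates sum to zero. Such a vector cannot have weight $1$, and there are $(q-1)\binom{r+1}{2}$ of weight $2$.

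Next, I will enumerate the tuples $(i_1,\dots,i_l)$ with $\sum i_j\le 2$, grouped by type:
\begin{itemize}
\item the zero tuple contributes $1$;
\item any tuple containing a coordinate equal to $1$ contributes $0$, because $\beta(r,1)=0$ is a factor; this covers the single-$1$ tuples and the tuples with two $1$'s;
\item the $l$ tuples with a single coordinate equal to $2$ each contribute $(q-1)\binom{r+1}{2}$.
\end{itemize}
Hence
$$S=1+l(q-1)\frac{(r+1)r}{2}.$$
Substituting $l=\frac{n}{r+1}$ gives $S=1+\frac{rn}{2}(q-1)$, which turns Lemma~\ref{bound1} into \eqref{5}. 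Taking $\log_q$ of \eqref{5} gives $k\le \frac{rn}{r+1}-\log_q S$. Since $\frac{rn}{r+1}=rl$ is an integer and $k$ is an integer, I can replace $\log_q S$ by $\lceil\log_q S\rceil$, which yields \eqref{6}.

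The argument has no real obstacle. The only point needing care is the bookkeeping in the enumeration: I must check that every tuple containing a $1$ is killed by the factor $\beta(r,1)=0$, and that no cross terms survive.
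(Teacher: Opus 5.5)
Your proposal is correct and is the same derivation the paper relies on: the paper gives no separate proof but cites \cite{FCX} for specializing Lemmas~\ref{bound1} and~\ref{bound2} to $\lfloor\frac{d-1}{2}\rfloor=2$, and your values $\beta(r,0)=1$, $\beta(r,1)=0$, $\beta(r,2)=(q-1)\binom{r+1}{2}$ give the denominator $1+l(q-1)\binom{r+1}{2}=\frac{rn}{2}(q-1)+1$. Your step from \eqref{5} to \eqref{6}, using that both $k$ and $\frac{rn}{r+1}=rl$ are integers, is also correct.
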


\subsection{Finite projective geometry}\label{1.2}

This subsection introduces the concepts concerning finite projective geometry.

\begin{definition}{\rm\cite{CD}}\label{incidence}
If $\mathcal{P}$ is a finite set of points, $\mathcal{L}$ is a finite set of lines, and $I$ is an incidence relation between them, then $P=(\mathcal{P},\mathcal{L},I)$ is called a \emph{finite incidence structure}.
\end{definition}

\begin{definition}{\rm\cite{CD}}\label{projective}
A \emph{finite projective space} $P$ is a finite incidence structure such that

$1$. any two distinct points are on exactly one line;

$2$. let $A,B,C,D$ be four distinct points of which no three are collinear. If the lines $AB$ and $CD$ intersect each other, then the lines $AD$ and $BC$ also intersect each other;

$3$. any line has at least three points.
\end{definition}

\begin{example}{\rm\cite{CD}}\label{pg}
Let $V$ be a vector space of dimension $u$ over the finite field $\mathbb{F}_{q}$, i.e., $V=\mathbb{F}_{q}^{u}$. The geometry $P(V)$ whose points are $1$-dimensional subspaces of $V$ and whose lines are $2$-dimensional subspaces of $V$ is a finite projective space, denoted by $PG(u-1,q)$.
\end{example}

For any nonzero vector $\boldsymbol{x}$ of $\mathbb{F}_{q}^{u}$, let $[\boldsymbol{x}]$ denote the $1$-dimensional subspace generated by $\boldsymbol{x}$, also referred to as a point in $PG(u-1,q)$ with $\boldsymbol{x}$ as its representative vector. For two distinct points $[\boldsymbol{x}]$ and $[\boldsymbol{y}]$ in $PG(u-1,q)$, let $\langle\boldsymbol{x},\boldsymbol{y}\rangle$ denote the line determined by $[\boldsymbol{x}]$ and $[\boldsymbol{y}]$.

Next, we introduce the definitions and properties of $r$-spread and partial $r$-spread, which are critical to one of our constructions of $k$-optimal LRCs.

\begin{definition}{\rm\cite{EJSS}}\label{spread}
A \emph{partial $r$-spread} of $\mathbb{F}_{q}^{u}$ is a collection $S =\{W_1, W_2, \ldots, W_l\}$ of $r$-dimensional subspaces of $\mathbb{F}_{q}^{u}$ such that $W_i \cap W_j =\{\boldsymbol 0\}$ for $1 \leq i < j \leq l$. We call $l$ the \emph{size} of $S$. Moreover, we call $S$ \emph{maximal} if $\left(\mathbb{F}_{q}^{u}\backslash \cup^{l}_{i=1}W_i\right)\cup\{\boldsymbol 0\}$ does not contain an $r$-dimensional subspace, and we call $S$ \emph{maximum} if it has the largest possible size. In particular, if $\cup^{l}_{i=1}W_i=\mathbb{F}_{q}^{u}$, then $S$ is simply called an \emph{$r$-spread}.
\end{definition}

\begin{lemma}{\rm\cite{B}}\label{spreadbound}
If $r\mid u$, then there exists an $r$-spread of $\mathbb{F}_{q}^{u}$ with $l=\frac{q^u-1}{q^r-1}$ subspaces.
\end{lemma}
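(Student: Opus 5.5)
The construction I would use is the classical field-extension (Desarguesian) spread. Since $r\mid u$, write $u=rm$ and identify $\mathbb{F}_q^u$ with $\mathbb{F}_{q^u}$ as an $\mathbb{F}_q$-vector space. I will build the spread in the intermediate picture $\mathbb{F}_{q^r}^m$, where $\mathbb{F}_{q^u}$ is a degree-$m$ extension of $\mathbb{F}_{q^r}$. Concretely, identify $\mathbb{F}_q^u$ with $\mathbb{F}_{q^r}^m$ via an $\mathbb{F}_q$-linear isomorphism, which exists because both have $\mathbb{F}_q$-dimension $rm=u$.

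For each nonzero $\boldsymbol{x}\in\mathbb{F}_{q^r}^m$, let $W_{\boldsymbol{x}}=\mathbb{F}_{q^r}\boldsymbol{x}$ be the $1$-dimensional $\mathbb{F}_{q^r}$-subspace it spans. Any $\mathbb{F}_{q^r}$-subspace is closed under $\mathbb{F}_q$-linear combinations, so $W_{\boldsymbol{x}}$ is an $\mathbb{F}_q$-subspace. It has $q^r$ elements, so its $\mathbb{F}_q$-dimension is $r$. Let $S$ be the collection of all distinct such subspaces; these are the points of $PG(m-1,q^r)$.

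Now I check the spread conditions in order.

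1. Two distinct $1$-dimensional $\mathbb{F}_{q^r}$-subspaces meet only in $\boldsymbol 0$. Their intersection is an $\mathbb{F}_{q^r}$-subspace of dimension less than $1$.
2. Every nonzero vector $\boldsymbol{y}$ lies in $W_{\boldsymbol{y}}$. Hence $\bigcup_{W\in S}W=\mathbb{F}_q^u$, and $S$ is an $r$-spread in the sense of Definition~\ref{spread}.
3. For the count, each $W\in S$ contributes exactly $q^r-1$ nonzero vectors. By step 1 these sets are pairwise disjoint, and by step 2 they cover all $q^u-1$ nonzero vectors. Therefore $l=\frac{q^u-1}{q^r-1}$, which is an integer because $r\mid u$.

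None of these steps is genuinely hard. The one point needing care is that the $\mathbb{F}_q$-structure inherited from $\mathbb{F}_{q^r}^m$ agrees with that of $\mathbb{F}_q^u$. This holds because $\mathbb{F}_q\subseteq\mathbb{F}_{q^r}$ and the identification was chosen $\mathbb{F}_q$-linear.

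Alternatively, one can work directly inside $\mathbb{F}_{q^u}$. Take the cosets $\alpha\mathbb{F}_{q^r}$ with $\alpha\in\mathbb{F}_{q^u}^*$; these are indexed by $\mathbb{F}_{q^u}^*/\mathbb{F}_{q^r}^*$, a group of order $\frac{q^u-1}{q^r-1}$. Each $\alpha\mathbb{F}_{q^r}$ is an $r$-dimensional $\mathbb{F}_q$-subspace, and two of them are either equal or meet only in $0$.
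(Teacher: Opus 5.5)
Your argument is correct. The $1$-dimensional $\mathbb{F}_{q^r}$-subspaces of $\mathbb{F}_{q^r}^{u/r}$, equivalently the sets $\alpha\mathbb{F}_{q^r}$ with $\alpha\in\mathbb{F}_{q^u}^*$, form the classical Desarguesian $r$-spread, and your partition count gives exactly $\frac{q^u-1}{q^r-1}$ members.

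The paper itself gives no proof of this lemma; it is quoted from \cite{B}. The only construction in the paper that realizes it is the $z=0$ case of the Etzion--Vardy partial spread recalled at the start of Section 4; the paper notes that Lemma \ref{pspreadbound} with $z=0$ is equivalent to Lemma \ref{spreadbound}. That route differs from yours. It writes $\mathbb{F}_q^u\cong\mathbb{F}_{q^{u-m}}\times\mathbb{F}_{q^m}$ with $m=r+z$, and builds the spread from three kinds of pieces: $R_1\subseteq\{0\}\times\mathbb{F}_{q^m}$; the $t$ subspaces $R_{i+2}=\alpha^i\mathbb{F}_{q^r}\times\{0\}$, which are precisely your subfield cosets, but used only inside the first factor; and $t(q^m-1)$ ``graph'' subspaces $R_a$ spanned by $(\alpha^i\gamma^k,\beta^{j+k})$, $0\le k\le r-1$.

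Checking trivial intersections and the count $tq^m+1$ there takes more work, and it needs $u>m$ (so $u\ge 2r$ when $z=0$). In exchange, the same recipe works for every residue $z$. It yields the partial spread of size $\frac{q^u-q^r(q^z-1)-1}{q^r-1}$ that Theorem \ref{ko6r} actually needs. It also comes with explicit bases, such as the relation $\boldsymbol v_1^{(1)}+\boldsymbol v_1^{(2)}=\boldsymbol v_1^{(t+2)}$ used there to force $d\le 6$.

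Your construction buys brevity and full self-containment, and it also covers the trivial case $u=r$. The price is that it is tied to $r\mid u$.
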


The distance measure for subspace codes is referred to as the subspace distance, which is typically defined as
$$d_{S}(U,V)=\mathrm{dim} U +\mathrm{dim} V-2\mathrm{dim}(U\cap V),$$
where $U,V$ are two distinct subspaces of $\mathbb{F}_{q}^{u}$. For $k\in [u]$, we denote by $A_q(u,k,d)$ the maximum cardinality of subspace codes over $\mathbb{F}_{q}^{u}$ with minimum subspace distance $d$ and all codewords of dimension $k$. With this notation, the size of a maximum partial $r$-spread in $\mathbb{F}_{q}^{u}$ is given by $A_q(u,r,2r)$.

\begin{lemma}{\rm\cite{EV}}\label{pspreadbound}
Let $u\equiv z\pmod r$. Then, for all $q$, we have
\begin{equation}\label{ps}
A_q(u,r,2r) \geq \frac{q^u-q^r(q^z-1)-1}{q^r-1}.
\end{equation}
\end{lemma}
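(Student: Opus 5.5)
We prove the bound by an explicit recursive construction: a large family of pairwise trivially-intersecting $r$-dimensional subspaces of $\mathbb{F}_q^u$ built from a field multiplication (a lifted rank-metric code), plus a recursive partial spread inside a complementary $(u-r)$-dimensional subspace. Write $u=mr+z$ with $0\le z<r$ and $m\ge 1$; if $u<r$ there is nothing to prove. The target value simplifies to
\[
\frac{q^u-q^r(q^z-1)-1}{q^r-1}=\frac{q^u-q^{r+z}}{q^r-1}+1 .
\]
So it suffices to construct a partial $r$-spread of $\mathbb{F}_q^u$ of size $f(u)$, where $f(r+z)=1$ and $f(u)=q^{u-r}+f(u-r)$ for $m\ge 2$. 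Unrolling the recursion gives
\[
f(u)=q^{u-r}+q^{u-2r}+\cdots+q^{r+z}+1=q^{r+z}\,\frac{q^{(m-1)r}-1}{q^r-1}+1=\frac{q^u-q^{r+z}}{q^r-1}+1,
\]
which is exactly the target. We argue by induction on $m$.

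\emph{Base case $m=1$.} Here $r\le u<2r$, and a single $r$-dimensional subspace is a partial $r$-spread of size $1=f(r+z)$.

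\emph{Inductive step $m\ge 2$.} Put $w=u-r$, so $w\ge r$, and write $\mathbb{F}_q^u=\mathbb{F}_q^r\oplus\mathbb{F}_q^w$. Identify $\mathbb{F}_q^w$ with the field $\mathbb{F}_{q^w}$. Since $w\ge r$, we can fix an injective $\mathbb{F}_q$-linear map $\iota:\mathbb{F}_q^r\to\mathbb{F}_{q^w}$. For each $a\in\mathbb{F}_{q^w}$ define
\[
W_a=\{(x,\,a\,\iota(x)) : x\in\mathbb{F}_q^r\}.
\]
\begin{enumerate}[(i)]
\item Each $W_a$ is an $r$-dimensional $\mathbb{F}_q$-subspace, being the graph of a linear map on $\mathbb{F}_q^r$.
\item For $a\neq b$, a common vector satisfies $(a-b)\iota(x)=0$. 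Since $\mathbb{F}_{q^w}$ is a field and $a-b\neq 0$, this forces $\iota(x)=0$, hence $x=0$. So $W_a\cap W_b=\{\boldsymbol 0\}$.
\item Each $W_a$ meets $U=\{0\}\oplus\mathbb{F}_q^w$ trivially, because its first coordinate determines the vector.
\end{enumerate}
This gives $q^w=q^{u-r}$ pairwise disjoint $r$-subspaces avoiding $U$. By induction, $U\cong\mathbb{F}_q^{w}$ with $w=(m-1)r+z$ contains a partial $r$-spread of size $f(w)$. Its members lie in $U$, so they meet every $W_a$ trivially. The union is therefore a partial $r$-spread of $\mathbb{F}_q^u$ of size $q^{u-r}+f(u-r)=f(u)$, which completes the induction.

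\emph{Remarks.} When $z=0$ the bound reduces to $\frac{q^u-1}{q^r-1}$, so Lemma~\ref{spreadbound} already covers that case; the induction above also handles it uniformly. The only real point is step (ii): we need $q^{u-r}$ linear maps $\mathbb{F}_q^r\to\mathbb{F}_q^{u-r}$ whose pairwise differences are injective (equivalently, full rank $r$). The field-multiplication trick provides this precisely because $u-r\ge r$ allows the embedding $\iota$, and that inequality is exactly why the recursion must stop at $r\le u<2r$.
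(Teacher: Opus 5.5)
Your proof is correct and complete, but it takes a different route from the construction the paper relies on. The paper does not prove the lemma itself. It cites \cite{EV} and, at the start of Section~4, recalls the Etzion--Vardy construction, which is one-shot rather than recursive. It writes $\mathbb{F}_q^u=\mathbb{F}_{q^{u-r-z}}\times\mathbb{F}_{q^{r+z}}$, so that $r\mid u-r-z$. It then takes the Desarguesian $r$-spread of the first factor, formed by the $t=\frac{q^{u-r-z}-1}{q^r-1}$ subspaces $\alpha^i\mathbb{F}_{q^r}$ with $0\le i\le t-1$. Each of these is lifted to $q^{r+z}$ pairwise disjoint graphs into the second factor: the zero map gives $R_{i+2}$, and the maps $\alpha^i\gamma^k\mapsto\beta^{j+k}$ give the $R_a$. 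Finally it adds the single subspace $R_1\subseteq\{0\}\times\mathbb{F}_{q^{r+z}}$, for $tq^{r+z}+1$ members in total.

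So the paper lifts a full spread of the large part into a small complement. You instead lift a single $r$-dimensional block into the large complement via the $q^{u-r}$ maps $x\mapsto a\,\iota(x)$, an MRD-type family whose nonzero differences all have rank $r$, and then recurse. Unrolling your recursion splits $\mathbb{F}_q^u$ into $r$-dimensional blocks $B_1,B_2,\ldots$ and a final block of dimension $r+z$. There are $q^{u-ir}$ graphs over $B_i$ and one subspace in the final block, and the geometric sum gives the same count.

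Your version is more elementary: it needs no primitive elements and no subfield $\mathbb{F}_{q^r}\subseteq\mathbb{F}_{q^{u-r-z}}$, and each disjointness check is one line. The paper's version gives an explicit, non-recursive basis for every member, and this is used in Theorem~\ref{ko6r}. There, the relation $\boldsymbol v_1^{(1)}+\boldsymbol v_1^{(2)}=\boldsymbol v_1^{(t+2)}$ among basis vectors of $R_1,R_2,R_{t+2}$ yields six dependent columns and hence $d\le 6$. Your spread can be made to carry the same kind of relation, but it must be arranged rather than read off. For instance, take $\iota(\mathbb{F}_q^r)$ as the first block when you recurse in $U$, so that $\{(0,\iota(x)) : x\in\mathbb{F}_q^r\}$ becomes a member. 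Then, for a basis $e_1,\ldots,e_r$ of $\mathbb{F}_q^r$, use the bases $(e_k,0)$, $(0,\iota(e_k))$ and $(e_k,\iota(e_k))$ of $W_0$, of that member, and of $W_1$.
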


When $z=0$, there exists an $r$-spread of $\mathbb{F}_{q}^{u}$. Consequently, the inequality \eqref{ps} holds with equality, which is equivalent to Lemma \ref{spreadbound}.

\section{$\boldsymbol{k}$-optimal LRCs with $\boldsymbol{d=5}$ and $\boldsymbol{r=3}$}

In this section, we propose a special Pasch configuration, which we denote as the strong Pasch configuration ($s$-Pasch configuration), by exploiting the point-line relationship in $PG(u-1,q)$. Based on the $s$-Pasch configuration, we then give a geometric characterization for the existence of $q$-ary LRCs with $d\geq5$ and $r=3$. Furthermore, we construct two classes of $q$-ary $k$-optimal LRCs with $d=5$ and $r=3$ in $PG(2,q)$. Throughout this section, we suppose $u\geq 3$ and $4\mid n$. Let $l=\frac{n}{4}$.

We first characterize the existence of $q$-ary LRCs with minimum distance at least $5$ and locality $3$ using the language of vector spaces.

\begin{theorem}\label{v-lrc53}
Suppose $u\geq3$. If there exist nonzero vectors $\boldsymbol v_1^{(i)}, \boldsymbol v_2^{(i)}, \boldsymbol v_3^{(i)}\in\mathbb{F}_{q}^u$ for $i\in[l]$, such that

$(1)$ $\boldsymbol v_1^{(i)}, \boldsymbol v_2^{(i)}, \boldsymbol v_3^{(i)}$ are linearly independent;

$(2)$ let $P_{i}=\{\boldsymbol v_1^{(i)}, \boldsymbol v_2^{(i)}, \boldsymbol v_3^{(i)}, \boldsymbol v_1^{(i)}-\boldsymbol v_2^{(i)}, \boldsymbol v_1^{(i)}-\boldsymbol v_3^{(i)}, \boldsymbol v_2^{(i)}-\boldsymbol v_3^{(i)}\}$, the vectors in $\cup_{i\in[l]} P_{i}$ are pairwise linearly independent, \\
then there exists a $q$-ary $(n=4l,k\geq 3l-u,d\geq 5;r=3)$-LRC with disjoint repair sets.

In particular, if one vector in $P_{i}$ belongs to one of the $2$-dimensional subspaces spanned by vectors in $P_{j}$ for some $1\leq i\neq j\leq l$, then $d=5$.
\end{theorem}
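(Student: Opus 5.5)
We take $H$ in the form displayed in Section~2 with $u$ rows in $H_2$, and put $\boldsymbol v_0^{(i)}=\boldsymbol 0$ for every $i\in[l]$. The columns are then $\boldsymbol h_{i,0}=(\boldsymbol e_i;\boldsymbol 0)$ and $\boldsymbol h_{i,j}=(\boldsymbol e_i;\boldsymbol v_j^{(i)})$ for $j=1,2,3$. The $l$ rows of $H_1$ are local parity-checks of weight $4$ with disjoint supports, so the code $\mathcal C$ with parity-check matrix $H$ has locality $3$ and disjoint repair sets. Since $H$ has $l+u$ rows, $k\ge n-(l+u)=3l-u$. By Lemma~\ref{hd}, it remains to show that any $4$ columns of $H$ are linearly independent, and then to exhibit $5$ dependent columns under the extra hypothesis.

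Suppose a nontrivial relation involves at most $4$ columns. Looking at the top $l$ coordinates, within each block $i$ the coefficients of the columns used must sum to $0$. Hence every block that contributes uses at least $2$ columns, which leaves four cases.

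\emph{(a) One block, two columns.} The relation reads $a(\boldsymbol v_j^{(i)}-\boldsymbol v_{j'}^{(i)})=\boldsymbol 0$. This is impossible, because $\boldsymbol v_j^{(i)}-\boldsymbol v_{j'}^{(i)}$ is $\pm$ an element of $P_i$, which is nonzero.

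\emph{(b) One block, three columns.} If column $0$ is among them, we get $a\boldsymbol v_j^{(i)}+b\boldsymbol v_{j'}^{(i)}=\boldsymbol 0$, so $a=b=0$ by condition (1), and then the third coefficient vanishes too. Otherwise we get $\sum_j a_j\boldsymbol v_j^{(i)}=\boldsymbol 0$, and condition (1) forces all $a_j=0$.

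\emph{(c) One block, four columns.} Condition (1) gives $a_1=a_2=a_3=0$, and then the sum condition gives $a_0=0$.

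\emph{(d) Two blocks $i\neq i'$, two columns each.} The relation becomes $a\boldsymbol x+b\boldsymbol y=\boldsymbol 0$ with $a,b\neq0$, $\boldsymbol x\in\pm P_i$ and $\boldsymbol y\in\pm P_{i'}$. This contradicts the pairwise linear independence in condition (2).

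So any $4$ columns are independent and $d\ge5$. This case analysis is routine; the only point requiring care is the observation that every difference of two columns within block $i$ has lower part in $\pm P_i$.

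For $d=5$, let $\boldsymbol x\in P_i$ lie in a plane spanned by vectors of $P_j$. I plan to read the hypothesis as: the plane is $\mathrm{Span}\{\boldsymbol v_a^{(j)}-\boldsymbol v_c^{(j)},\boldsymbol v_b^{(j)}-\boldsymbol v_c^{(j)}\}$ for some triple $\{a,b,c\}\subset\{0,1,2,3\}$. This covers every pair of elements of $P_j$ that share an index. Then $\boldsymbol x=\gamma_1(\boldsymbol v_a^{(j)}-\boldsymbol v_c^{(j)})+\gamma_2(\boldsymbol v_b^{(j)}-\boldsymbol v_c^{(j)})$. On the $H_2$ part this is a combination of the three columns $\boldsymbol h_{j,a},\boldsymbol h_{j,b},\boldsymbol h_{j,c}$ whose coefficients sum to $0$. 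Also $\boldsymbol x=\boldsymbol v_s^{(i)}-\boldsymbol v_t^{(i)}$ for suitable $s,t$ (with $\boldsymbol v_0^{(i)}=\boldsymbol 0$), which is the $H_2$ part of $\boldsymbol h_{i,s}-\boldsymbol h_{i,t}$, again a combination with coefficient sum $0$. Subtracting the two expressions gives a nontrivial relation among $5$ columns, since the coefficients of $\boldsymbol h_{i,s}$ and $\boldsymbol h_{i,t}$ are $\pm1$. Lemma~\ref{hd}(3) then gives $d=5$.

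The main obstacle is the case where the plane is spanned by two elements of $P_j$ with disjoint index pairs, such as $\boldsymbol v_a^{(j)}$ and $\boldsymbol v_b^{(j)}-\boldsymbol v_c^{(j)}$. Writing such a vector in general needs all four columns of block $j$, which would only give $6$ dependent columns. I would first check whether the $2$-dimensional subspaces meant in the statement are those determined by three columns of a block. Failing that, I would try to use the remaining freedom in block $j$ to reduce to three columns. If that fails, I would restrict the statement to planes of the triple type.
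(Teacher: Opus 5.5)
Your proof of $d\ge 5$ is the paper's proof. The paper also sets $\boldsymbol v_0^{(i)}=\boldsymbol 0$, reduces to columns from at most two blocks, and checks four columns of one block (your (c)) and two columns from each of two blocks (its Cases 1--3 are your (d), split by how many zero columns occur). Your remark that every block in the support of a relation carries at least two columns just makes that reduction explicit. For $d=5$ the paper likewise writes out only two triple-type relations, for $\mathrm{Span}\{\boldsymbol v_1^{(j)},\boldsymbol v_2^{(j)}\}$ and $\mathrm{Span}\{\boldsymbol v_1^{(j)}-\boldsymbol v_2^{(j)},\boldsymbol v_1^{(j)}-\boldsymbol v_3^{(j)}\}$, and calls the remaining cases analogous. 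Its geometric restatement (Theorem 1$'$, lines in $\mathcal{L}_j\cup\mathcal{M}_j$) confirms that your problematic planes, the three lines of $\mathcal{M}_j$, are intended.

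The obstacle you flag cannot be overcome, because the claim is false there. Your third option is the right resolution, and you can prove it is sharp. Since any four columns are independent, five dependent columns carry five nonzero coefficients. The top rows force each block in the support to contribute at least two of them, so the support splits as $2+3$ over blocks $i\neq j$. The block-$i$ part has lower part $\lambda\boldsymbol x$ with $\lambda\neq 0$ and $\boldsymbol x\in\pm P_i$. Any zero-sum combination of $\boldsymbol h_{j,a},\boldsymbol h_{j,b},\boldsymbol h_{j,c}$ has lower part in $\mathrm{Span}\{\boldsymbol v_a^{(j)}-\boldsymbol v_c^{(j)},\boldsymbol v_b^{(j)}-\boldsymbol v_c^{(j)}\}$. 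Hence $d=5$ exactly when some vector of some $P_i$ lies on a line of $\mathcal{L}_j$ with $j\neq i$.

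Your second option (reducing to three columns of block $j$) fails because zero-sum representations within a block are unique. The lower parts there are $\boldsymbol 0,\boldsymbol v_1^{(j)},\boldsymbol v_2^{(j)},\boldsymbol v_3^{(j)}$, the last three independent. So $\alpha\boldsymbol v_1^{(j)}+\beta(\boldsymbol v_2^{(j)}-\boldsymbol v_3^{(j)})$ arises only from coefficients $(-\alpha,\alpha,\beta,-\beta)$ on $\boldsymbol h_{j,0},\dots,\boldsymbol h_{j,3}$. Condition (2) forces $\alpha\beta\neq 0$ when this vector is in $P_i$, so all four columns are needed.

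Here is an explicit counterexample over any $\mathbb{F}_q$. Take $u=5$, $l=2$, the standard basis $\boldsymbol f_1,\dots,\boldsymbol f_5$ of $\mathbb{F}_q^5$, $\boldsymbol v_k^{(1)}=\boldsymbol f_k$ for $k=1,2,3$, and $\boldsymbol v_1^{(2)}=\boldsymbol f_1+\boldsymbol f_2-\boldsymbol f_3$, $\boldsymbol v_2^{(2)}=\boldsymbol f_4$, $\boldsymbol v_3^{(2)}=\boldsymbol f_5$.
\begin{enumerate}
\item[(i)] Conditions (1) and (2) hold: each vector of $P_1$ has a zero among its first three coordinates, $\boldsymbol v_1^{(2)}$ has none, and the other vectors of $P_2$ have a nonzero fourth or fifth coordinate.
\item[(ii)] $\boldsymbol v_1^{(2)}\in\mathrm{Span}\{\boldsymbol v_1^{(1)},\boldsymbol v_2^{(1)}-\boldsymbol v_3^{(1)}\}$, so the hypothesis of the final clause is met.
\item[(iii)] $\boldsymbol v_1^{(2)}$ is the only vector of $P_2$ in $\mathrm{Span}\{\boldsymbol f_1,\boldsymbol f_2,\boldsymbol f_3\}$, and it avoids the four $\mathcal{L}_1$ planes $x_3=0$, $x_2=0$, $x_1=0$, $x_1+x_2+x_3=0$.
\item[(iv)] Each $\mathcal{L}_2$ plane meets $\mathrm{Span}\{\boldsymbol f_1,\boldsymbol f_2,\boldsymbol f_3\}$ only in $\{\boldsymbol 0\}$ or $[\boldsymbol v_1^{(2)}]$, so it contains no vector of $P_1$.
\end{enumerate}
By the criterion, $d\ge 6$. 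The relation $\boldsymbol h_{2,1}-\boldsymbol h_{2,0}-\boldsymbol h_{1,1}-\boldsymbol h_{1,2}+\boldsymbol h_{1,3}+\boldsymbol h_{1,0}=\boldsymbol 0$ then gives $d=6$.

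So state and prove the final clause for $\mathcal{L}_j$ only. The paper's ``analogously'' is wrong for $\mathcal{M}_j$. Its $PG(2,q)$ constructions are unaffected, since they only use the $\mathcal{L}_1$ line $\langle\boldsymbol v_1^{(1)},\boldsymbol v_3^{(1)}\rangle$.
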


\begin{proof}
Let
{\small
   \begin{equation*}
    H=
    \left(\begin{array}{cccc|cccc|c|cccc}
    1 & 1 & 1 & 1 & 0 & 0 & 0 & 0 & \cdots & 0 & 0 & 0 & 0\\
    0 & 0 & 0 & 0 & 1 & 1 & 1 & 1 & \cdots & 0 & 0 & 0 & 0\\
    \vdots & \vdots & \vdots & \vdots & \vdots & \vdots & \vdots & \vdots & \ddots & \vdots & \vdots & \vdots & \vdots \\
    0 & 0 & 0 & 0 & 0 & 0 & 0 & 0 & \cdots & 1 & 1 & 1 & 1\\
    \hline
    \boldsymbol 0 & \boldsymbol v_{1}^{(1)} & \boldsymbol v_{2}^{(1)} & \boldsymbol v_{3}^{(1)} & \boldsymbol 0 & \boldsymbol v_{1}^{(2)} & \boldsymbol v_{2}^{(2)} & \boldsymbol v_{3}^{(2)} & \cdots & \boldsymbol 0 & \boldsymbol v_{1}^{(l)} & \boldsymbol v_{2}^{(l)} & \boldsymbol v_{3}^{(l)} \\
    \end{array}\right)
    =
    \left(\begin{array}{c}
    H_1\\
    \hline
    H_2\\
    \end{array}\right)
    \end{equation*}
}and let $\mathcal{C}$ be the $q$-ary linear code with parity-check matrix $H$. Then it is easy to see that $\mathcal{C}$ has locality $r=3$, code length $n=4l$ and dimension $k\geq n-l-u=3l-u$. Thus we only need to show that the minimum distance $d\geq5$. By Lemma \ref{hd}, it is sufficient to demonstrate that any four columns of $H$ are linearly independent.

Now express $H$ in column vector form
$$H=(\boldsymbol{h}_{1,0},\boldsymbol{h}_{1,1},\boldsymbol{h}_{1,2},\boldsymbol {h}_{1,3},\cdots,\boldsymbol{h}_{l,0},\boldsymbol{h}_{l,1},\boldsymbol{h}_{l,2},\boldsymbol {h}_{l,3}),$$
where $\boldsymbol{h}_{i,j}=\left(
 \begin{array}{c}
 \boldsymbol{e}_{i} \\
 \boldsymbol v_{j}^{(i)} \\
 \end{array}
 \right)\in \mathbb{F}_{q}^{l+u}$ and $\boldsymbol v_{0}^{(i)}=\boldsymbol 0 \in \mathbb{F}_{q}^{u}$ for any $i\in [l]$, $0\leq j \leq 3$. Note that any column of $H$ cannot be linearly represented by columns from other repair sets. So in order to prove that any four columns in $H$ are linearly independent, we only need to consider any four columns taken from at most two different repair sets. Without loss of generality, we assume these four columns come from the first two repair sets. Then the following two cases should be discussed based on whether the four columns are taken from one or two repair sets.

$(1)$ All four columns are taken from the first repair set. In this case, we suppose
$$a_0 \boldsymbol{h}_{1,0} + a_1 \boldsymbol{h}_{1,1} + a_2 \boldsymbol{h}_{1,2} + a_3 \boldsymbol{h}_{1,3} = \boldsymbol{0},$$
where $a_0, a_1, a_2, a_3 \in \mathbb{F}_q$. Then the following system of equations holds,
\[ \begin{cases}
(a_0+ a_1 + a_2 + a_3 ) \boldsymbol{e}_1= \boldsymbol{0};\\
a_1 \boldsymbol{v}_{1}^{(1)} + a_2 \boldsymbol{v}_{2}^{(1)} + a_3 \boldsymbol{v}_{3}^{(1)} = \boldsymbol{0}.
\end{cases} \]
The linear independence of $\boldsymbol v_1^{(1)},\boldsymbol v_2^{(1)},\boldsymbol v_3^{(1)}$ leads to $a_1 = a_2 = a_3 = 0 $ and hence $a_0 = 0 $. Thus $\boldsymbol{h}_{1,0}, \boldsymbol{h}_{1,1}, \boldsymbol{h}_{1,2}, \boldsymbol{h}_{1,3}$ are linearly independent.

$(2)$ Two columns are taken from the first repair set and two from the second repair set. This can be further divided into the following three cases, according to the number of all-zero columns selected from $H_2$.

Case 1: In this case, assume
$$a_0 \boldsymbol{h}_{1,0} + a_1 \boldsymbol{h}_{1,s} + a_2 \boldsymbol{h}_{2,0} + a_3 \boldsymbol{h}_{2,t} = \boldsymbol{0},$$
where $a_0, a_1, a_2, a_3 \in \mathbb{F}_q$ and $s,t\in[3]$. Then the following system of equations is obtained,
\[ \begin{cases}
(a_0 + a_1) \boldsymbol{e}_1= (a_2 + a_3) \boldsymbol{e}_2= \boldsymbol{0};\\
a_1 \boldsymbol{v}_{s}^{(1)} + a_3 \boldsymbol{v}_{t}^{(2)} = \boldsymbol{0}.
\end{cases} \]
The linear independence of $\boldsymbol v_s^{(1)},\boldsymbol v_t^{(2)}$ leads to $a_1 = a_3 = 0 $ and hence $a_0 =a_2 = 0 $. Thus $\boldsymbol{h}_{1,0}, \boldsymbol{h}_{1,s}, \boldsymbol{h}_{2,0}, \boldsymbol{h}_{2,t}$ are linearly independent.

Case 2: In this case, assume
$$a_0 \boldsymbol{h}_{1,0} + a_1 \boldsymbol{h}_{1,s} + a_2 \boldsymbol{h}_{2,t_1} + a_3 \boldsymbol{h}_{2,t_2} = \boldsymbol{0},$$
where $a_0, a_1, a_2, a_3 \in \mathbb{F}_q$, $s\in[3]$ and $\{t_1,t_2\}\in\binom{[3]}{2}$. Then we have
\[ \begin{cases}
(a_0 + a_1) \boldsymbol{e}_1= (a_2 + a_3) \boldsymbol{e}_2= \boldsymbol{0};\\
a_1 \boldsymbol{v}_{s}^{(1)} + a_2 \boldsymbol{v}_{t_1}^{(2)}+ a_3 \boldsymbol{v}_{t_2}^{(2)} = \boldsymbol{0}.
\end{cases} \]
So $a_3=-a_2$ holds and thus $a_1 \boldsymbol{v}_{s}^{(1)} + a_2 (\boldsymbol{v}_{t_1}^{(2)}- \boldsymbol{v}_{t_2}^{(2)} ) = \boldsymbol{0}$. The linear independence of $\boldsymbol{v}_{s}^{(1)},\boldsymbol{v}_{t_1}^{(2)}- \boldsymbol{v}_{t_2}^{(2)}$ leads to $a_1 = a_2 = 0 $ and hence $a_0 =a_3 = 0 $. Thus $\boldsymbol{h}_{1,0}, \boldsymbol{h}_{1,s}, \boldsymbol{h}_{2,t_1}, \boldsymbol{h}_{2,t_2}$ are linearly independent.

Case 3: In this case, assume
$$a_0 \boldsymbol{h}_{1,s_1} + a_1 \boldsymbol{h}_{1,s_2} + a_2 \boldsymbol{h}_{2,t_1} + a_3 \boldsymbol{h}_{2,t_2} = \boldsymbol{0},$$
where $a_0, a_1, a_2, a_3 \in \mathbb{F}_q$ and $\{s_1,s_2\},\{t_1,t_2\}\in\binom{[3]}{2}$. Then the following system of equations holds,
\[ \begin{cases}
(a_0 + a_1) \boldsymbol{e}_1= (a_2 + a_3) \boldsymbol{e}_2= \boldsymbol{0};\\
a_0 \boldsymbol{v}_{s_1}^{(1)} + a_1 \boldsymbol{v}_{s_2}^{(1)} + a_2 \boldsymbol{v}_{t_1}^{(2)}+ a_3 \boldsymbol{v}_{t_2}^{(2)} = \boldsymbol{0}.
\end{cases} \]
So we have $a_1=-a_0$, $a_3=-a_2$ and thus $a_0 (\boldsymbol{v}_{s_1}^{(1)}-\boldsymbol{v}_{s_2}^{(1)}) + a_2 (\boldsymbol{v}_{t_1}^{(2)}- \boldsymbol{v}_{t_2}^{(2)} ) = \boldsymbol{0}$. The linear independence of $\boldsymbol{v}_{s_1}^{(1)}-\boldsymbol{v}_{s_2}^{(1)}$, $\boldsymbol{v}_{t_1}^{(2)}-\boldsymbol{v}_{t_2}^{(2)}$ leads to $a_0 = a_2 = 0 $ and hence $a_i= 0$, $0\leq i\leq3$. Thus $\boldsymbol{h}_{1,s_1}, \boldsymbol{h}_{1,s_2}, \boldsymbol{h}_{2,t_1}, \boldsymbol{h}_{2,t_2}$ are linearly independent.

From the above analysis, the first conclusion holds.

For the second conclusion, when $\boldsymbol v_s^{(i)}\in \mathrm{Span} \{\boldsymbol v_{1}^{(j)},\boldsymbol v_{2}^{(j)}\}$ for some $i\neq j$ and $s\in [3]$, suppose $\boldsymbol v_s^{(i)}= a \boldsymbol v_{1}^{(j)}+b \boldsymbol v_{2}^{(j)}$ for $a,b\in \mathbb{F}_{q}$. Then we have
$$-\boldsymbol{h}_{i,0}+ \boldsymbol{h}_{i,s}+(a+b)\boldsymbol{h}_{j,0}- a\boldsymbol{h}_{j,1}- b\boldsymbol{h}_{j,2}=\boldsymbol{0}.$$
Therefore, these five columns $\boldsymbol{h}_{i,0}, \boldsymbol{h}_{i,s}, \boldsymbol{h}_{j,0}, \boldsymbol{h}_{j,1}, \boldsymbol{h}_{j,2}$ are linearly dependent. When $\boldsymbol v_{s}^{(i)}-\boldsymbol v_{t}^{(i)}\in \mathrm{Span}\{\boldsymbol v_{1}^{(j)}-\boldsymbol v_{2}^{(j)},\boldsymbol v_{1}^{(j)}-\boldsymbol v_{3}^{(j)}\}$ for some $i\neq j$ and $\{s,t\}\in\binom{[3]}{2}$, suppose $\boldsymbol v_{s}^{(i)}-\boldsymbol v_{t}^{(i)}= a (\boldsymbol v_{1}^{(j)}-\boldsymbol v_{2}^{(j)})+b (\boldsymbol v_{1}^{(j)}-\boldsymbol v_{3}^{(j)})$ for $a,b\in \mathbb{F}_{q}$. Then we have
$$\boldsymbol{h}_{i,s}- \boldsymbol{h}_{i,t}+(a+b)\boldsymbol{h}_{j,1}- a\boldsymbol{h}_{j,2}- b\boldsymbol{h}_{j,3}=\boldsymbol{0}.$$
Therefore, these five columns $\boldsymbol{h}_{i,s}, \boldsymbol{h}_{i,t}, \boldsymbol{h}_{j,1}, \boldsymbol{h}_{j,2}, \boldsymbol{h}_{j,3}$ are linearly dependent. The proof for the other cases can be carried out analogously. Thus we have $d=5$.
\end{proof}

In order to provide a geometric version of Theorem \ref{v-lrc53}, it is necessary to present some geometric configurations.

\begin{definition}{\rm\cite{CD}}\label{pasch}
A \emph{Pasch configuration} is known as a $(6,4)$-configuration in an $STS(v)$, which is a set of $4$ lines whose union contains precisely $6$ points. Here, an $STS(v)$ is a set $V$ of $v$ elements together with a collection $\mathcal{B}$ of $3$-subsets (blocks) of $V$ with the property that every $2$-subset of $V$ occurs in exactly one blocks $B \in \mathcal{B}$.
\end{definition}

Next, we generalize the Pasch configuration in $PG(u-1,q)$ and define a stronger structure.

\begin{definition}\label{s-pasch}
Let $[\boldsymbol v_1]$, $[\boldsymbol v_2]$ and $[\boldsymbol v_3]$ be three non-collinear points in $PG(u-1,q)$. The configuration ($\mathcal{P}$,$\mathcal{L}$) determined by the points in $\mathcal{P}=\{[\boldsymbol v_1], [\boldsymbol v_2], [\boldsymbol v_3], [\boldsymbol v_1-\boldsymbol v_2],[\boldsymbol v_1-\boldsymbol v_3], [\boldsymbol v_2-\boldsymbol v_3]\}$ and the lines in $\mathcal{L}=\{L_1, L_2, L_3, L_4\}$, as shown in Fig \ref{fig:fig1}, is called a \emph{strong Pasch configuration} (\emph{$s$-Pasch configuration}).
\end{definition}

\begin{figure}[htbp]
    \centering  % 关键，让后续内容水平居中
    \begin{tikzpicture}[scale=0.9]{\small
        % 定义直线 l_{i1}
        \draw[thick] (-3,1) -- (3,1) node[right] {$L_{1}$};
        % 定义直线 l_{i2}
        \draw[thick] (-3,1.5) -- (2,-1) node[right] {$L_{2}$};
        % 定义直线 l_{i3}
        \draw[thick] (0,1.75) -- (0,-11/4) node[right] {$L_{3}$};
        % 定义直线
        \draw[thick] (-0.5,-11/4) -- (2.5,7/4) node[right] {$L_{4}$};

        % 标注点和公式，坐标可按需微调
        % p_{i1}
        \draw[fill=black] (-2,1) circle (2pt);
        \node[below left] at (-2,1) {$[\boldsymbol v_1]$};
        % p_{i2}
        \draw[fill=black] (0,1) circle (2pt);
        \node[above left] at (0,1) {$[\boldsymbol v_2]$};
        % p_{i4}
        \draw[fill=black] (2,1) circle (2pt);
        \node[above left] at (2.05,1) {$[\boldsymbol v_1 - \boldsymbol v_2]$};
        % p_{i3}
        \draw[fill=black] (0,0) circle (2pt);
        \node[below left] at (0,0) {$[\boldsymbol v_3]$};
        % p_{i5}
        \draw[fill=black] (1,-0.5) circle (2pt);
        \node[right] at (1.1,-0.4) {$[\boldsymbol v_1 - \boldsymbol v_3]$};
        % p_{i6}
        \draw[fill=black] (0,-2) circle (2pt);
        \node[left] at (0,-1.9) {$[\boldsymbol v_2 - \boldsymbol v_3]$};
        }

    \end{tikzpicture}
    \caption{$s$-Pasch configuration}
    \label{fig:fig1}
\end{figure}

Since the lines in $\mathcal{L}$ are uniquely determined by the points in $\mathcal{P}$, we simply denote the $s$-Pasch configuration ($\mathcal{P}$,$\mathcal{L}$) by $\mathcal{P}$ for brevity.

\begin{definition}\label{disjoint}
Two $s$-Pasch configurations $\mathcal{P}_1$ and $\mathcal{P}_2$ are said to be \emph{disjoint} if their point sets are disjoint.
\end{definition}

It is evident that all the lines determined by any two points in $\mathcal{P}$ include not only the four lines shown in Fig \ref{fig:fig1}, but also the lines $\langle\boldsymbol v_1,\boldsymbol v_2-\boldsymbol v_3\rangle$, $\langle\boldsymbol v_2,\boldsymbol v_1-\boldsymbol v_3\rangle$ and $\langle\boldsymbol v_3,\boldsymbol v_1-\boldsymbol v_2\rangle$. Define $\mathcal{M}=\{\langle\boldsymbol v_1,\boldsymbol v_2-\boldsymbol v_3\rangle, \langle\boldsymbol v_2,\boldsymbol v_1-\boldsymbol v_3\rangle, \langle\boldsymbol v_3,\boldsymbol v_1-\boldsymbol v_2\rangle\}$. Hence, each of the remaining lines in $PG(u-1,q)$ contains at most one point from $\mathcal{P}$.

According to the above analysis, it is possible to restate Theorem \ref{v-lrc53} in geometric terms.

\begin{theoremprime}[1]\label{g-lrc53}
If there exist $l$ pairwise disjoint $s$-Pasch configurations $(\mathcal{P}_{i}$,$\mathcal{L}_{i})$ for $i\in[l]$ in $PG(u-1,q)$, then there exists a $q$-ary $(n=4l,k\geq 3l-u,d\geq 5;r=3)$-LRC with disjoint repair sets.

In particular, if one of the points in $\mathcal{P}_{i}$ lies in one of the lines in $\mathcal{L}_{j}\cup\mathcal{M}_{j}$ for some $1\leq i\neq j\leq l$, then $d=5$.
\end{theoremprime}

\begin{proof}
For any $i\in[l]$, let $\mathcal{P}_i=\{[\boldsymbol v_{1}^{(i)}], [\boldsymbol v_{2}^{(i)}], [\boldsymbol v_{3}^{(i)}], [\boldsymbol v_{1}^{(i)}-\boldsymbol v_{2}^{(i)}], [\boldsymbol v_{1}^{(i)}-\boldsymbol v_{3}^{(i)}], [\boldsymbol v_{2}^{(i)}-\boldsymbol v_{3}^{(i)}]\}$. By the definition of $s$-Pasch configuration, the three points $[\boldsymbol v_{1}^{(i)}]$, $[\boldsymbol v_{2}^{(i)}]$ and $[\boldsymbol v_{3}^{(i)}]$ are not collinear and therefore the representative vectors $\boldsymbol v_1^{(i)},\boldsymbol v_2^{(i)},\boldsymbol v_3^{(i)}$ are linearly independent. From the given hypotheses, these $l$ $s$-Pasch configurations contain precisely $6l$ distinct points from $PG(u-1,q)$. Consequently, it can be deduced that the $6l$ vectors in $\cup_{i\in[l]} P_{i}$ are pairwise linearly independent, where $P_{i}$ is defined in Theorem \ref{v-lrc53}.

Note that one of the points in $\mathcal{P}_{i}$ lying on a line in $\mathcal{L}_{j}\cup\mathcal{M}_{j}$ is equivalent to some vector in $P_{i}$ belonging to one of the $2$-dimensional subspaces spanned by vectors in $P_{j}$ for some $1\leq i\neq j\leq l$.

Then the theorem follows from Theorem \ref{v-lrc53}.
\end{proof}

According to Theorem \ref{g-lrc53}, the existence of $k$-optimal LRCs will be demonstrated via the relationship between points and lines in $PG(u-1,q)$. Next, we construct such an LRC in $PG(2,q)$. We first consider the case where $q$ is an odd prime power, as shown in the following theorem.

\begin{theorem}\label{ko53odd}
Let $q$ be an odd prime power and $q\geq5$, then there exists a $q$-ary $k$-optimal $(4l,3l-3,5;3)$-LRC with $l$ disjoint repair sets, where $\frac{q+1}{6}<l\leq q-2$.
\end{theorem}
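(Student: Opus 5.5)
The plan is to exhibit, inside $PG(2,q)$ (so $u=3$), a family of $l$ pairwise disjoint $s$-Pasch configurations for every $l\le q-2$. Theorem~1' (the geometric form of Theorem~\ref{v-lrc53}) then gives a $q$-ary $(4l,k\ge 3l-3,d\ge 5;3)$-LRC with disjoint repair sets. Finally, the bound of Lemma~\ref{pb} will show that $k=3l-3$ exactly and that the code is $k$-optimal. Since a subfamily of pairwise disjoint configurations is again pairwise disjoint, it suffices to build one family of size $q-2$ and take any $l$ of its members.

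\emph{Construction.} I would index the configurations by a parameter $\lambda$ running over a set $\Lambda\subseteq\mathbb F_q$ of size $q-2$, for instance $\Lambda=\mathbb F_q\setminus\{0,-1\}$ or $\mathbb F_q\setminus\{0,1\}$, whichever makes the degeneracies disappear. For each $\lambda$ I would set
$$\boldsymbol v_1^{(\lambda)}=(1,\lambda,0),\qquad \boldsymbol v_2^{(\lambda)}=(0,1,\lambda),\qquad \boldsymbol v_3^{(\lambda)}=(\lambda,0,1)$$
or a similar cyclic or affine parametrisation. The first check is non-collinearity: the relevant determinant is $1+\lambda^3$, and I would adjust the shape of the vectors, or the excluded values, so that this determinant is nonzero for every $\lambda\in\Lambda$.

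\emph{Main obstacle: disjointness.} The key step is to list the six points $[\boldsymbol v_1],[\boldsymbol v_2],[\boldsymbol v_3],[\boldsymbol v_1-\boldsymbol v_2],[\boldsymbol v_1-\boldsymbol v_3],[\boldsymbol v_2-\boldsymbol v_3]$ in normalized homogeneous coordinates. For $\lambda\neq\mu$, I would then show that no point of configuration $\lambda$ is proportional to any point of configuration $\mu$. This is a finite case check of $6\times 6$ pairs of points, each reducing to a polynomial equation in $\lambda,\mu$. I expect the hypothesis that $q$ is odd to enter exactly here: coincidences such as $\boldsymbol v_1-\boldsymbol v_2$ with a scalar multiple of $\boldsymbol v_1+\boldsymbol v_2$ force $2=0$ or $\lambda=-\mu$. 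Whatever parametrisation is chosen, the surviving exceptional values must be absorbed into the two excluded elements of $\mathbb F_q$, which is why the family has exactly $q-2$ members. If the cyclic choice fails, I would instead fix the base triangle $\{(1,0,0),(0,1,0),(0,0,1)\}$ up to a projectivity $\mathrm{diag}(1,\lambda,\lambda^2)$ and redo the check.

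\emph{$d=5$.} Any two lines of $PG(2,q)$ meet. I would also verify directly, for instance, that $[\boldsymbol v_1^{(\lambda)}]$ lies on one of the lines in $\mathcal L_\mu\cup\mathcal M_\mu$ for some $\mu\neq\lambda$. For the cyclic choice above, all the points $[\boldsymbol v_1^{(\lambda)}]=[1:\lambda:0]$ lie on the line $z=0$, and I would check that this line belongs to some $\mathcal L_\mu\cup\mathcal M_\mu$. The second part of Theorem~1' then gives $d=5$.

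\emph{Optimality.} With $r=3$ and $n=4l$, the bound \eqref{6} reads
$$k\le 3l-\left\lceil\log_q\big(6l(q-1)+1\big)\right\rceil.$$
The lower condition $l>\frac{q+1}{6}$ is equivalent to $6l(q-1)+1>q^2$. The upper condition $l\le q-2$ gives $6l(q-1)+1\le 6(q-2)(q-1)+1\le q^3$, where the last inequality holds for $q\ge5$: at $q=5$ it reads $73\le 125$, and the difference $q^3-6(q-1)(q-2)-1$ increases with $q$ from there on. Hence the ceiling equals $3$, so $k\le 3l-3$. Combined with $k\ge 3l-3$ this forces $k=3l-3$, and the code is a $k$-optimal $(4l,3l-3,5;3)$-LRC.
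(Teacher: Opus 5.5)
\textbf{There is a genuine gap at the heart of the proof.} Your outer reduction is the same as the paper's: Theorem~1$'$ for existence and $d\ge 5$, an incidence between two configurations for $d=5$, and the bound of Lemma~\ref{pb} for $k=3l-3$. Your optimality computation ($q^2<6l(q-1)+1\le q^3$) is also correct. What is missing is an actual family of $q-2$ pairwise disjoint $s$-Pasch configurations in $PG(2,q)$, and the candidates you name do not work.

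\emph{The cyclic family fails disjointness.} Take $\boldsymbol v_1^{(\lambda)}=(1,\lambda,0)$, $\boldsymbol v_2^{(\lambda)}=(0,1,\lambda)$, $\boldsymbol v_3^{(\lambda)}=(\lambda,0,1)$. For any $\mu\neq 1$, put $\lambda=1/(1-\mu)$. Then
\[
(1-\mu)\bigl(\boldsymbol v_1^{(\lambda)}-\boldsymbol v_2^{(\lambda)}\bigr)=(1-\mu)(1,\lambda-1,-\lambda)=(1-\mu,\mu,-1)=\boldsymbol v_1^{(\mu)}-\boldsymbol v_3^{(\mu)} .
\]
Similarly, $\lambda=1-1/\mu$ makes $[\boldsymbol v_1^{(\lambda)}-\boldsymbol v_2^{(\lambda)}]=[\boldsymbol v_2^{(\mu)}-\boldsymbol v_3^{(\mu)}]$. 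So the configurations indexed by an orbit of the order-$3$ M\"obius map $\mu\mapsto 1/(1-\mu)$ pairwise share a point. At most roughly $q/3$ of them can be kept, far short of $q-2$; already for $q=5$ the family cannot give $3$ disjoint configurations.

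\emph{Non-collinearity also fails.} The equation $1+\lambda^3=0$ has three roots when $q\equiv 1\pmod 6$, so excluding two values of $\lambda$ is not enough.

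\emph{The fallback fails as stated.} $\mathrm{diag}(1,\lambda,\lambda^2)$ fixes $[(1,0,0)]$, $[(0,1,0)]$ and $[(0,0,1)]$. Every image of a configuration built on the coordinate triangle therefore contains these three points.

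Your claim that the exceptional values ``must be absorbed into the two excluded elements'' is exactly what needs proof, and for your family it is false.

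\textbf{What the paper does instead.} It keeps $\mathcal{P}_1$ on the standard basis. For $m\in\mathbb{F}_q\setminus\{0,1,-1\}$ it places the three base points on the three diagonal lines $M_1,M_2,M_3$ of $\mathcal{P}_1$:
\[
\boldsymbol v_1^{(m)}=(1,m,-m),\quad \boldsymbol v_2^{(m)}=(m^2,m,-m^2),\quad \boldsymbol v_3^{(m)}=(m,-m,1).
\]
This gives $q-3$ further configurations, $q-2$ in total. Disjointness is then checked point by point from the explicit coordinates, using zero patterns and non-proportionality of coefficients (Appendix~\ref{aa}).

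The same choice gives $d=5$ directly. The point $[\boldsymbol v_1^{(m)}-\boldsymbol v_2^{(m)}]=[(1-m^2,0,m^2-m)]$ lies on the line $\langle\boldsymbol v_1^{(1)},\boldsymbol v_3^{(1)}\rangle\in\mathcal{L}_1$. The paper therefore always includes $\mathcal{P}_1$ among the $l$ selected configurations.

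Your step ``take any $l$ of its members'' needs the same care. The incidence used for $d=5$ must involve two configurations that are actually selected. The bound of Lemma~\ref{pb} is identical for $d=5$ and $d=6$, so optimality alone cannot force $d=5$.
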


\begin{proof}
By Theorem \ref{g-lrc53}, it is sufficient to find $l$ pairwise disjoint $s$-Pasch configurations in $PG(2,q)$, where $\frac{q+1}{6}<l\leq q-2$.

Take a basis $\boldsymbol v_{1}^{(1)}=(1,0,0)$, $\boldsymbol v_{2}^{(1)}=(0,1,0)$, $\boldsymbol v_{3}^{(1)}=(0,0,1)$ of $\mathbb{F}_{q}^3$. Let $\mathcal{P}_1=\{[\boldsymbol v_{1}^{(1)}],[\boldsymbol v_{2}^{(1)}],[\boldsymbol v_{3}^{(1)}],[\boldsymbol v_{1}^{(1)}-\boldsymbol v_{2}^{(1)}],[\boldsymbol v_{1}^{(1)}-\boldsymbol v_{3}^{(1)}],[\boldsymbol v_{2}^{(1)}-\boldsymbol v_{3}^{(1)}]\}$, then $\mathcal{P}_1$ is an $s$-Pasch configuration. Let $M_1=\langle \boldsymbol v_{1}^{(1)},\boldsymbol v_{2}^{(1)}-\boldsymbol v_{3}^{(1)}\rangle$, $M_2=\langle \boldsymbol v_{2}^{(1)},\boldsymbol v_{1}^{(1)}-\boldsymbol v_{3}^{(1)}\rangle$ and $M_3=\langle \boldsymbol v_{3}^{(1)},\boldsymbol v_{1}^{(1)}-\boldsymbol v_{2}^{(1)}\rangle$. The seven lines determined by the points in $\mathcal{P}_1$ and their intersection points are shown in Fig \ref{fig:fig2}.

\begin{figure}[htbp]
    \centering  % 关键，让后续内容水平居中
    \begin{tikzpicture}[scale=1.35]{\small
        % 定义直线 l_{i1}
        \draw[thick] (-5,2) -- (3,2) node[right] {$\langle \boldsymbol v_{1}^{(1)},\boldsymbol v_{2}^{(1)}\rangle$};
        % 定义直线 l_{i2}
        \draw[thick] (-5,2.5) -- (3,-1.5) node[right] {$\langle \boldsymbol v_{1}^{(1)},\boldsymbol v_{3}^{(1)}\rangle$};
        % 定义直线 l_{i3}
        \draw[thick] (0,3) -- (0,-4) node[below] {$\langle \boldsymbol v_{2}^{(1)},\boldsymbol v_{3}^{(1)}\rangle$};
        % 定义直线
        \draw[thick] (18/7,3) -- (-10/7,-4) ;

        \draw[dashed][thick] (-5,23/8) -- (20/7,-4) node[right] {$M_1$};

        \draw[dashed][thick] (-2/7,3) -- (12/7,-4) node[right] {$M_2$};

        \draw[dashed][thick] (3,3) -- (-4,-4) node[right] {$M_3$};

        % 标注点和公式，坐标可按需微调
        % p_{i1}
        \draw[fill=black] (-4,2) circle (2.25pt);
        \node[below left] at (-4,2) {$[\boldsymbol v_1^{(1)}]$};
        % p_{i2}
        \draw[fill=black] (0,2) circle (2.25pt);
        \node[above left] at (-0.05,2) {$[\boldsymbol v_2^{(1)}]$};
        % p_{i4}
        \draw[fill=black] (2,2) circle (2.25pt);
        \node[above left] at (2.05,2) {$[\boldsymbol v_1^{(1)} - \boldsymbol v_2^{(1)}]$};
        % p_{i3}
        \draw[fill=black] (0,0) circle (2.25pt);
        \node[left=4pt] at (0,0) {$[\boldsymbol v_3^{(1)}]$};
        % p_{i5}
        \draw[fill=black] (2/3,-1/3) circle (2.25pt);
        \node[right=5pt] at (2/3,-1/5) {$[\boldsymbol v_1^{(1)} - \boldsymbol v_3^{(1)}]$};
        % p_{i6}
        \draw[fill=black] (0,-3/2) circle (2.25pt);
        \node[left] at (0,-3/2) {$[\boldsymbol v_2^{(1)} - \boldsymbol v_3^{(1)}]$};

        \draw[fill=black] (4/3,-8/3) circle (2.25pt);
        \node[right=3pt] at (4/3,-8/3) {$[\boldsymbol v_1^{(1)}+\boldsymbol v_2^{(1)} - \boldsymbol v_3^{(1)}]$};

        \draw[fill=black] (4/9,4/9) circle (2.25pt);
        \node[right=3pt] at (4/9,4/9) {$[\boldsymbol v_1^{(1)}-\boldsymbol v_2^{(1)}-\boldsymbol v_3^{(1)}]$};

        \draw[fill=black] (-4/5,-4/5) circle (2.25pt);
        \node[left=4pt] at (-4/5,-4/5) {$[\boldsymbol v_1^{(1)}-\boldsymbol v_2^{(1)}+\boldsymbol v_3^{(1)}]$};
        }

    \end{tikzpicture}
    \caption{The seven lines determined by the points in $\mathcal{P}_1$}
    \label{fig:fig2}
\end{figure}

For any $m\in\mathbb{F}_{q}\backslash\{0,1,-1\}$, we take one  point $[\boldsymbol v_i^{(m)}]$ from each $M_i$, $i\in[3]$, with
\[ \begin{cases}
\boldsymbol v_1^{(m)}= \boldsymbol v_1^{(1)}+m(\boldsymbol v_2^{(1)}-\boldsymbol v_3^{(1)}) ;\\
\boldsymbol v_2^{(m)}= m(\boldsymbol v_2^{(1)}+m(\boldsymbol v_1^{(1)}-\boldsymbol v_3^{(1)})) ;\\
\boldsymbol v_3^{(m)}= \boldsymbol v_3^{(1)}+m(\boldsymbol v_1^{(1)}-\boldsymbol v_2^{(1)}).
\end{cases} \]
Since $\boldsymbol v_1^{(m)}-m^{-1}\boldsymbol v_2^{(m)}=(1-m)(\boldsymbol v_1^{(1)}-\boldsymbol v_2^{(1)})$, the line $\langle\boldsymbol v_1^{(m)},\boldsymbol v_2^{(m)}\rangle$ intersects $M_3$ at the point $[\boldsymbol v_1^{(1)}-\boldsymbol v_2^{(1)}]$. It is easy to see that the three points $[\boldsymbol v_1^{(m)}],[\boldsymbol v_2^{(m)}]$ and $[\boldsymbol v_3^{(m)}]$ are non-collinear. Thus, for any $m\in\mathbb{F}_{q}\backslash\{0,1,-1\}$, we can define the $s$-Pasch configuration $\mathcal{P}_m=\{[\boldsymbol v_1^{(m)}],[\boldsymbol v_2^{(m)}],[\boldsymbol v_3^{(m)}],[\boldsymbol v_1^{(m)}-\boldsymbol v_2^{(m)}],$ $[\boldsymbol v_1^{(m)}-\boldsymbol v_3^{(m)}],[\boldsymbol v_2^{(m)}-\boldsymbol v_3^{(m)}]\}$, where
\[ \begin{cases}
\boldsymbol v_1^{(m)}-\boldsymbol v_2^{(m)}= (1-m^2)\boldsymbol v_1^{(1)}+(m^2-m)\boldsymbol v_3^{(1)} ;\\
\boldsymbol v_1^{(m)}-\boldsymbol v_3^{(m)}= (1-m)\boldsymbol v_1^{(1)}+2m\boldsymbol v_2^{(1)}-(m+1)\boldsymbol v_3^{(1)} ;\\
\boldsymbol v_2^{(m)}-\boldsymbol v_3^{(m)}= (m^2-m)\boldsymbol v_1^{(1)}+2m\boldsymbol v_2^{(1)}-(m^2+1)\boldsymbol v_3^{(1)}.
\end{cases} \]

Next we will show that these $q-2$ $s$-Pasch configurations are pairwise disjoint. Firstly, we prove that $\mathcal{P}_1$ is disjoint from $\mathcal{P}_m$ for any $m\in\mathbb{F}_{q}\backslash\{0,1,-1\}$. Note that the choice of $m$ ensures that the coefficients of $\boldsymbol v_1^{(1)}$, $\boldsymbol v_2^{(1)}$ and $\boldsymbol v_3^{(1)}$ in the representative vectors $\boldsymbol v_1^{(m)}$, $\boldsymbol v_2^{(m)}$, $\boldsymbol v_3^{(m)}$ and $\boldsymbol v_1^{(m)}-\boldsymbol v_3^{(m)}$ are all nonzero. It is evident that the coefficients of $\boldsymbol v_1^{(1)}$ and $\boldsymbol v_3^{(1)}$ in the representative vector $\boldsymbol v_1^{(m)}-\boldsymbol v_2^{(m)}$ are not opposite to each other. For the representative vector $\boldsymbol v_2^{(m)}-\boldsymbol v_3^{(m)}$, even though the coefficient of $\boldsymbol v_3^{(1)}$ may be zero, the coefficients of $\boldsymbol v_1^{(1)}$ and $\boldsymbol v_2^{(1)}$ are not opposite to each other. Secondly, we prove that $\mathcal{P}_m$ and $\mathcal{P}_{a}$ are disjoint for all $m,a\in\mathbb{F}_{q}\backslash\{0,1,-1\}$ with $m\neq a$. For further details, please refer to the Appendix \ref{aa}. Thus, we obtain $q-2$ pairwise disjoint $s$-Pasch configurations $\mathcal{P}_i$ for $i\in\mathbb{F}_{q}\backslash\{0,-1\}$.

Finally, it follows from $q\geq5$ that $l\geq 2$. Observe that the point $[\boldsymbol v_1^{(m)} - \boldsymbol v_2^{(m)}]$ in $\mathcal{P}_m$ lies on the line $\langle \boldsymbol v_1^{(1)},\boldsymbol v_3^{(1)}\rangle$ of $\mathcal{L}_1$ for any $m\in\mathbb{F}_{q}\backslash\{0,1,-1\}$. Besides $\mathcal{P}_1$, we can select $l-1$ more from the remaining $q-3$ $s$-Pasch configurations for $2\leq l\leq q-2$. Thus by Theorem \ref{g-lrc53}, there exists a $q$-ary $(n=4l,k\geq3l-3,d=5;r=3)$-LRC with $l$ disjoint repair sets, where $\frac{q+1}{6}<l\leq q-2$. By the upper bound \eqref{6} in Lemma \ref{pb}, the dimension
\begin{equation}\label{7}
k \leq \frac{3n}{4} - \left\lceil \log_q\left(\frac{3n}{2}(q-1) + 1\right)\right\rceil.
\end{equation}
Since $q\geq5$ is an odd prime power and $\frac{q+1}{6}<l\leq q-2$, we have
$$q^2 < \frac{3n}{2}(q-1) + 1 =6l(q-1) +1<q^3.$$
Then from \eqref{7}, it can be seen that $k\leq3l-3$. Thus, $k=3l-3$ and the conclusion of the theorem holds.
\end{proof}

\begin{example}\label{ex2.1}
Let $q=9$ in Theorem \ref{ko53odd}. Suppose $\alpha$ is a root of the primitive polynomial $x^2+x+2$ over $\mathbb{F}_{3}$. Then $\mathbb{F}_9=\{0, 1, 2, \alpha, \alpha+1, \alpha+2, 2\alpha, 2\alpha+1, 2\alpha+2\}$. Now take the following seven $s$-Pasch configurations $\mathcal{P}_m=\{[\boldsymbol v_1^{(m)}],[\boldsymbol v_2^{(m)}],[\boldsymbol v_3^{(m)}],[\boldsymbol v_1^{(m)}-\boldsymbol v_2^{(m)}],[\boldsymbol v_1^{(m)}-\boldsymbol v_3^{(m)}],[\boldsymbol v_2^{(m)}-\boldsymbol v_3^{(m)}]\}$ for $m\in\mathbb{F}_{9}\backslash\{0,-1\}$ in $PG(2,9)$.
{\small
\begin{align*}
\mathcal{P}_1\quad\ ={} &\{[(1,0,0)],[(0,1,0)],[(0,0,1)],[(1,2,0)],[(1,0,2)],[(0,1,2)]\};\\
\mathcal{P}_\alpha\quad\ ={} &\{[(1, \alpha, 2\alpha)],[(2\alpha+1, \alpha, \alpha+2)],[(\alpha , 2\alpha, 1)],\\
{} &[(\alpha, 0, \alpha+1)],[(2\alpha+1, 2\alpha, 2\alpha+2)],[( \alpha+1, 2\alpha, \alpha+1)]\};\\
\mathcal{P}_{\alpha + 1}\ ={} &\{[(1, \alpha + 1, 2\alpha + 2)],[(\alpha+2,  \alpha + 1, 2\alpha+1)],[(\alpha + 1, 2\alpha + 2, 1)],\\
{} &[(2\alpha+2, 0, 1)],[(2\alpha, 2\alpha + 2, 2\alpha+1)],[(1, 2\alpha+2, 2\alpha)]\};\\
\mathcal{P}_{\alpha + 2}\ ={} &\{[(1, \alpha + 2, 2 \alpha + 1)],[(2, \alpha + 2, 1)],[(\alpha + 2, 2\alpha + 1, 1)],\\
{} &[(2, 0, 2\alpha)],[(2\alpha+2, 2\alpha + 1, 2\alpha)],[(2\alpha, 2\alpha + 1, 0)]\};\\
\mathcal{P}_{2\alpha}\quad={} &\{[(1, 2\alpha, \alpha)],[(2\alpha+1, 2\alpha, \alpha+2)],[(2\alpha, \alpha,1)],\\
{} &[(\alpha, 0, 1)],[(\alpha + 1, \alpha, \alpha+2)],[(1, \alpha, \alpha+1)]\};\\
\mathcal{P}_{2\alpha + 1}={} &\{[(1, 2\alpha + 1, \alpha + 2)],[(2, 2\alpha + 1, 1)],[(2\alpha+1, \alpha + 2, 1)],\\
{} &[(2, 0, \alpha+1)],[(\alpha, \alpha + 2, \alpha + 1)],[(\alpha + 1, \alpha+2, 0)]\};\\
\mathcal{P}_{2\alpha + 2}={} &\{[(1, 2\alpha + 2, \alpha + 1)],[(\alpha + 2, 2\alpha+2, 2\alpha + 1)],[(2\alpha + 2, \alpha + 1, 1)],\\
{} &[(2\alpha + 2, 0, 2\alpha)],[(\alpha+2, \alpha + 1, \alpha)],[(2\alpha, \alpha+1, 2\alpha)]\}.
\end{align*}
}It is easy to see that these seven $s$-Pasch configurations are pairwise disjoint and the point $[(\alpha, 0, \alpha+1)]$ in $\mathcal{P}_\alpha$ is on the line $\langle (1,0,0),(0,0,1)\rangle$ in $\mathcal{L}_1$. In addition to $\mathcal{P}_1$ and $\mathcal{P}_\alpha$, select $l-2$ more from the remaining five $s$-Pasch configurations for $2\leq l\leq7$. Thus by Theorem \ref{ko53odd}, we can obtain the existence of a $9$-ary $k$-optimal $(4l,3l-3,5;3)$-LRC with $l$ disjoint repair sets, where $2\leq l\leq7$.
\end{example}

Next, we consider the case where $q$ is an even prime power.

\begin{theorem}\label{ko53even}
Let $q=2^t$, $t\geq 2$, then there exists a $q$-ary $k$-optimal $(4l,3l-3,5;3)$-LRC with $l$ disjoint repair sets, where $\max\{1,\frac{q+1}{6}\}<l\leq q-1$.
\end{theorem}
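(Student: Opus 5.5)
The plan is to follow the proof of Theorem \ref{ko53odd}, adapted to characteristic $2$. By Theorem \ref{g-lrc53} it suffices to find $q-1$ pairwise disjoint $s$-Pasch configurations in $PG(2,q)$ with at least one incidence of the form "a point of $\mathcal{P}_i$ lies on a line of $\mathcal{L}_j\cup\mathcal{M}_j$". Then any $l\geq 2$ of them containing such a pair give $d=5$ and $k\geq 3l-3$.

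Since $-1=1$, the standard configuration $\mathcal{P}_1=\{[\boldsymbol v_1],[\boldsymbol v_2],[\boldsymbol v_3],[\boldsymbol v_1+\boldsymbol v_2],[\boldsymbol v_1+\boldsymbol v_3],[\boldsymbol v_2+\boldsymbol v_3]\}$, with $\boldsymbol v_i$ the unit vectors, is a Fano subplane $PG(2,2)$ with the point $[\boldsymbol v_1+\boldsymbol v_2+\boldsymbol v_3]$ removed. The three lines $M_1,M_2,M_3$ of $\mathcal{M}_1$ are concurrent at that point. This is why we lose only the value $m=1$, not $\pm1$, and can reach $q-1$ configurations.

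\textbf{Construction.} For $m\in\mathbb{F}_q\setminus\{0,1\}$, mimic the odd case by taking one point on each $M_i$:
\[
\boldsymbol v_1^{(m)}=\boldsymbol v_1+m(\boldsymbol v_2+\boldsymbol v_3),\quad \boldsymbol v_2^{(m)}=m\bigl(\boldsymbol v_2+m(\boldsymbol v_1+\boldsymbol v_3)\bigr),\quad \boldsymbol v_3^{(m)}=\boldsymbol v_3+m(\boldsymbol v_1+\boldsymbol v_2).
\]
The asymmetric scaling of $\boldsymbol v_2^{(m)}$ matters. With the symmetric choice one gets $\boldsymbol v_1^{(m)}+\boldsymbol v_2^{(m)}=(1+m)(\boldsymbol v_1+\boldsymbol v_2)$, which collides with $\mathcal{P}_1$. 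With the choice above, $\boldsymbol v_1^{(m)}+\boldsymbol v_2^{(m)}=(1+m^2)\boldsymbol v_1+(m+m^2)\boldsymbol v_3$, a point of $\langle\boldsymbol v_1,\boldsymbol v_3\rangle\in\mathcal{L}_1$ different from $[\boldsymbol v_1],[\boldsymbol v_3],[\boldsymbol v_1+\boldsymbol v_3]$ because $m\neq0,1$. This also supplies the required incidence for $d=5$.

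\textbf{Verification steps.}
\begin{enumerate}
\item Check non-collinearity of $[\boldsymbol v_1^{(m)}],[\boldsymbol v_2^{(m)}],[\boldsymbol v_3^{(m)}]$ by a $3\times3$ determinant, which should factor into powers of $m$ and $1+m$. This makes each $\mathcal{P}_m$ an $s$-Pasch configuration.
\item Show $\mathcal{P}_m\cap\mathcal{P}_1=\emptyset$ by listing the six representatives in coordinates. No representative may have exactly one nonzero coordinate, or exactly two equal nonzero coordinates and a zero; this is a coefficient check as in the odd case.
\item Show $\mathcal{P}_m\cap\mathcal{P}_a=\emptyset$ for $m\neq a$, as in Appendix \ref{aa}. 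For each of the $36$ pairs of representatives, proportionality forces a polynomial relation in $m,a$ over $\mathbb{F}_{2^t}$, and one must show it implies $m=a$ or $m,a\in\{0,1\}$. Pairs with different zero patterns are dismissed immediately. For instance, $\boldsymbol v_1^{(m)}+\boldsymbol v_2^{(m)}$ is the only representative with second coordinate $0$ (apart from possible degenerate values), so it can only match its counterpart, giving $(1+m)/m=(1+a)/a$ and hence $m=a$.
\end{enumerate}

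\textbf{Main obstacle.} Step 3 is the hardest. Some representative (the analogue of $\boldsymbol v_2^{(m)}+\boldsymbol v_3^{(m)}$) may have a coordinate vanishing for special $m$, such as a cube root of unity when $3\mid q-1$. Those cases need separate handling. If an unavoidable collision appears, I would try replacing the scaling factor $m$ in $\boldsymbol v_2^{(m)}$ by $m^{-1}$ or $m^2$.

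\textbf{Optimality.} With $n=4l$, the bound \eqref{6} of Lemma \ref{pb} reads $k\leq 3l-\lceil\log_q(6l(q-1)+1)\rceil$. For $l>\frac{q+1}{6}$ we get $6l(q-1)+1>q^2$. For $l\leq q-1$ we get $6(q-1)^2+1<q^3$, which holds for $q\geq4$ (at $q=4$ it reads $55<64$). So the ceiling equals $3$ and $k\leq 3l-3$. Together with $k\geq 3l-3$ this gives $k=3l-3$.

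The condition $l>1$ in the statement is needed so that $\mathcal{P}_1$ and some $\mathcal{P}_m$ are both present, giving $d=5$. For $q=4$ we have $\frac{q+1}{6}<1$, so the $\max$ matters there. For $2\leq l\leq q-1$, we take $\mathcal{P}_1$ together with any $l-1$ of the $\mathcal{P}_m$.
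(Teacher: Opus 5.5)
\textbf{There is a genuine gap: your choice of $\boldsymbol v_3^{(m)}$ makes the configurations fail to be disjoint, and this happens for every $m$, not only for special values.}

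With $\boldsymbol v_1^{(m)}=(1,m,m)$, $\boldsymbol v_2^{(m)}=(m^2,m,m^2)$ and $\boldsymbol v_3^{(m)}=(m,m,1)$, all three representatives have second coordinate $m$. In characteristic $2$, all three pairwise sums therefore lie on the line $\langle\boldsymbol v_1,\boldsymbol v_3\rangle$. In particular
\[
\boldsymbol v_1^{(m)}+\boldsymbol v_3^{(m)}=(1+m,\,2m,\,1+m)=(1+m)(\boldsymbol v_1+\boldsymbol v_3),
\]
so $[\boldsymbol v_1^{(m)}+\boldsymbol v_3^{(m)}]=[\boldsymbol v_1+\boldsymbol v_3]\in\mathcal{P}_1$ for every $m$. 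Hence $\mathcal{P}_m\cap\mathcal{P}_1\neq\emptyset$, and also $\mathcal{P}_m\cap\mathcal{P}_a\neq\emptyset$ for all $m,a$. So you do not even reach $l=2$.

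This is the same collision you spotted for the pair $\boldsymbol v_1,\boldsymbol v_2$ under the symmetric choice: the odd-case term $2m\boldsymbol v_2$ in $\boldsymbol v_1^{(m)}-\boldsymbol v_3^{(m)}$ vanishes. More structurally, the line $L_4$ of every $\mathcal{P}_m$ is $\langle\boldsymbol v_1,\boldsymbol v_3\rangle$. That line would have to carry $3(q-1)$ distinct points, but it has only $q+1$. Consequently several of your claims are false for your vectors:
\begin{itemize}
\item Step 2, the ``coefficient check as in the odd case'', fails.
\item Step 3's assertion that $\boldsymbol v_1^{(m)}+\boldsymbol v_2^{(m)}$ is the only representative with zero second coordinate fails.
\item Your fallback of rescaling $\boldsymbol v_2^{(m)}$ cannot help, because $\boldsymbol v_2^{(m)}$ does not enter this collision.
\end{itemize}

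The fix, which is what the paper does, is to rescale $\boldsymbol v_3^{(m)}$ as well: take $\boldsymbol v_3^{(m)}=m^2\bigl(\boldsymbol v_3+m(\boldsymbol v_1+\boldsymbol v_2)\bigr)=(m^3,m^3,m^2)$. Scaling by $m$ alone also fails, since then $\boldsymbol v_2^{(m)}+\boldsymbol v_3^{(m)}=(m+m^2)(\boldsymbol v_2+\boldsymbol v_3)$.

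With the paper's choice, the six points of $\mathcal{P}_m$ are, up to scalars,
\[
(1,m,m),\ (m,1,m),\ (m,m,1),\ (1+m,0,m),\ (1+m+m^2,\,m+m^2,\,m),\ (m,1+m,0).
\]
The zero-pattern and ratio comparisons you outline in Steps 2--3 then go through. The first coordinate of the fifth point vanishes when $m$ is a primitive cube root of unity; this is harmless, but it has to be allowed for.

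The rest of your argument matches the paper and is fine: the incidence of $[\boldsymbol v_1^{(m)}+\boldsymbol v_2^{(m)}]$ with $\langle\boldsymbol v_1,\boldsymbol v_3\rangle$ giving $d=5$, and the optimality estimate $q^2<6l(q-1)+1<q^3$.
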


\begin{proof}
By Theorem \ref{g-lrc53}, it is sufficient to find $l$ pairwise disjoint $s$-Pasch configurations in $PG(2,q)$, where $\max\{1,\frac{q+1}{6}\}<l\leq q-1$.

Take a basis $\boldsymbol v_{1}^{(1)}=(1,0,0)$, $\boldsymbol v_{2}^{(1)}=(0,1,0)$, $\boldsymbol v_{3}^{(1)}=(0,0,1)$ of $\mathbb{F}_{q}^3$. Let $\mathcal{P}_1=\{[\boldsymbol v_{1}^{(1)}],[\boldsymbol v_{2}^{(1)}],[\boldsymbol v_{3}^{(1)}],[\boldsymbol v_{1}^{(1)}+\boldsymbol v_{2}^{(1)}],[\boldsymbol v_{1}^{(1)}+\boldsymbol v_{3}^{(1)}],[\boldsymbol v_{2}^{(1)}+\boldsymbol v_{3}^{(1)}]\}$, then $\mathcal{P}_1$ is an $s$-Pasch configuration. Let $M_1=\langle \boldsymbol v_{1}^{(1)},\boldsymbol v_{2}^{(1)}+\boldsymbol v_{3}^{(1)}\rangle$, $M_2=\langle \boldsymbol v_{2}^{(1)},\boldsymbol v_{1}^{(1)}+\boldsymbol v_{3}^{(1)}\rangle$ and $M_3=\langle \boldsymbol v_{3}^{(1)},\boldsymbol v_{1}^{(1)}+\boldsymbol v_{2}^{(1)}\rangle$. The seven lines determined by the points in $\mathcal{P}_1$ and their intersection points are shown in Fig \ref{fig:fig3}.

\begin{figure}[htbp]
    \centering  % 关键，让后续内容水平居中
    \begin{tikzpicture}[scale=1.35]{\small
        % 定义直线 l_{i1}
        \draw[thick] (-5,2) -- (3,2) node[right] {$\langle \boldsymbol v_{1}^{(1)},\boldsymbol v_{2}^{(1)}\rangle$};
        % 定义直线 l_{i2}
        \draw[thick] (-5,9/4) -- (3,1/4) node[right] {$\langle \boldsymbol v_{1}^{(1)},\boldsymbol v_{3}^{(1)}\rangle$};
        % 定义直线 l_{i3}
        \draw[thick] (0,3) -- (0,-4) node[below, right=1pt] {$\langle \boldsymbol v_{2}^{(1)},\boldsymbol v_{3}^{(1)}\rangle$};
        % 定义直线
        \draw[thick] (5/2,3) -- (-1,-4) ;

        \draw[dashed][thick] (-5,3) -- (2,-4) node[right] {$M_1$};

        \draw[dashed][thick] (1,3) -- (-5,-3) node[right] {$M_2$};

        \draw[dashed][thick] (3,5/2) -- (-5,-3/2) node[right=4pt, below=1pt] {$M_3$};

        % 标注点和公式，坐标可按需微调
        % p_{i1}
        \draw[fill=black] (-4,2) circle (2.25pt);
        \node[below left] at (-4,2) {$[\boldsymbol v_1^{(1)}]$};
        % p_{i2}
        \draw[fill=black] (0,2) circle (2.25pt);
        \node[above left] at (-0.05,2) {$[\boldsymbol v_2^{(1)}]$};
        % p_{i4}
        \draw[fill=black] (2,2) circle (2.25pt);
        \node[above left] at (2.05,2) {$[\boldsymbol v_1^{(1)}+ \boldsymbol v_2^{(1)}]$};
        % p_{i3}
        \draw[fill=black] (0,1) circle (2.25pt);
        \node[left] at (0.05,0.55) {$[\boldsymbol v_3^{(1)}]$};
        % p_{i5}
        \draw[fill=black] (4/3,2/3) circle (2.25pt);
        \node[right=3pt] at (4/3,5/6) {$[\boldsymbol v_1^{(1)}+ \boldsymbol v_3^{(1)}]$};
        % p_{i6}
        \draw[fill=black] (0,-2) circle (2.25pt);
        \node[left=2pt] at (0,-2) {$[\boldsymbol v_2^{(1)}+ \boldsymbol v_3^{(1)}]$};

        \draw[fill=black] (-2,0) circle (2.25pt);
        \node[left=5pt] at (-2,0.1) {$[\boldsymbol v_1^{(1)}+\boldsymbol v_2^{(1)}+\boldsymbol v_3^{(1)}]$};
        }

    \end{tikzpicture}
    \caption{The seven lines determined by the points in $\mathcal{P}_1$}
    \label{fig:fig3}
\end{figure}

For any $m\in\mathbb{F}_{q}\backslash\{0,1\}$, we take one point $[\boldsymbol v_i^{(m)}]$ from each $M_i$, $i\in[3]$, with
\[ \begin{cases}
\boldsymbol v_1^{(m)}= \boldsymbol v_1^{(1)}+m(\boldsymbol v_2^{(1)}+\boldsymbol v_3^{(1)}) ;\\
\boldsymbol v_2^{(m)}= m(\boldsymbol v_2^{(1)}+m(\boldsymbol v_1^{(1)}+\boldsymbol v_3^{(1)})) ;\\
\boldsymbol v_3^{(m)}= m^2(\boldsymbol v_3^{(1)}+m(\boldsymbol v_1^{(1)}+\boldsymbol v_2^{(1)})).
\end{cases} \]
Since $\boldsymbol v_1^{(m)}+m^{-1}\boldsymbol v_2^{(m)}=(1+m)(\boldsymbol v_1^{(1)}+\boldsymbol v_2^{(1)})$, the line $\langle\boldsymbol v_1^{(m)},\boldsymbol v_2^{(m)}\rangle$ intersects $M_3$ at the point $[\boldsymbol v_1^{(1)}+\boldsymbol v_2^{(1)}]$. It is easy to see that the three points $[\boldsymbol v_1^{(m)}],[\boldsymbol v_2^{(m)}]$ and $[\boldsymbol v_3^{(m)}]$ are non-collinear. Thus, for any $m\in\mathbb{F}_{q}\backslash\{0,1\}$, we can define the $s$-Pasch configuration $\mathcal{P}_m=\{[\boldsymbol v_1^{(m)}],[\boldsymbol v_2^{(m)}],[\boldsymbol v_3^{(m)}],[\boldsymbol v_1^{(m)}+\boldsymbol v_2^{(m)}],$ $[\boldsymbol v_1^{(m)}+\boldsymbol v_3^{(m)}],[\boldsymbol v_2^{(m)}+\boldsymbol v_3^{(m)}]\}$, where
\[ \begin{cases}
\boldsymbol v_1^{(m)}+\boldsymbol v_2^{(m)}= (m^2+1)\boldsymbol v_1^{(1)}+(m^2+m)\boldsymbol v_3^{(1)} ;\\
\boldsymbol v_1^{(m)}+\boldsymbol v_3^{(m)}= (m^3+1)\boldsymbol v_1^{(1)}+m(m^2+1)\boldsymbol v_2^{(1)}+m(m+1)\boldsymbol v_3^{(1)} ;\\
\boldsymbol v_2^{(m)}+\boldsymbol v_3^{(m)}= m^2(m+1)\boldsymbol v_1^{(1)}+m(m^2+1)\boldsymbol v_2^{(1)}.
\end{cases} \]

Next we will show that these $q-1$ $s$-Pasch configurations are pairwise disjoint. Firstly, we prove that $\mathcal{P}_1$ is disjoint from $\mathcal{P}_m$ for any $m\in\mathbb{F}_{q}\backslash\{0,1\}$. Note that the choice of $m$ ensures that the coefficients of $\boldsymbol v_1^{(1)}$, $\boldsymbol v_2^{(1)}$ and $\boldsymbol v_3^{(1)}$ in the representative vectors $\boldsymbol v_1^{(m)}$, $\boldsymbol v_2^{(m)}$, $\boldsymbol v_3^{(m)}$ and $\boldsymbol v_1^{(m)}+\boldsymbol v_3^{(m)}$ are all nonzero. It is evident that the coefficients of $\boldsymbol v_1^{(1)}$ and $\boldsymbol v_3^{(1)}$ in the representative vector $\boldsymbol v_1^{(m)}+\boldsymbol v_2^{(m)}$ are distinct, and the coefficients of $\boldsymbol v_1^{(1)}$ and $\boldsymbol v_2^{(1)}$ in the representative vector $\boldsymbol v_2^{(m)}+\boldsymbol v_3^{(m)}$ are also distinct. Secondly, we prove that $\mathcal{P}_m$ and $\mathcal{P}_{a}$ are disjoint for all $m,a\in\mathbb{F}_{q}\backslash\{0,1\}$ with $m\neq a$. For further details, please refer to the Appendix \ref{ab}. Thus, we obtain $q-1$ pairwise disjoint $s$-Pasch configurations $\mathcal{P}_i$ for $i\in\mathbb{F}_{q}^{*}$.

Finally, it follows from $q\geq4$ that $l\geq2$. It is observed that the point $[\boldsymbol v_1^{(m)} + \boldsymbol v_2^{(m)}]$ in $\mathcal{P}_m$ must be on the line $\langle \boldsymbol v_1^{(1)},\boldsymbol v_3^{(1)}\rangle$ in $\mathcal{L}_1$ for any $m\in\mathbb{F}_{q}\backslash\{0,1\}$. Besides $\mathcal{P}_1$, we can select $l-1$ more from the remaining $q-2$ $s$-Pasch configurations for $2\leq l\leq q-1$. Thus by Theorem \ref{g-lrc53}, there exists a $q$-ary $(n=4l,k\geq3l-3,d=5;r=3)$-LRC with $l$ disjoint repair sets, where $\max\{1,\frac{q+1}{6}\}<l\leq q-1$. By the upper bound \eqref{6} in Lemma \ref{pb}, the dimension satisfies
\begin{equation}\label{8}
k \leq \frac{3n}{4} - \left\lceil \log_q\left(\frac{3n}{2}(q-1) + 1\right)\right\rceil.
\end{equation}
Since $q\geq4$ is an even prime power and $\max\{1,\frac{q+1}{6}\}<l\leq q-1$, we have
$$q^2 < \frac{3n}{2}(q-1) + 1 =6l(q-1) +1<q^3.$$
Then from \eqref{8}, it can be seen that $k\leq3l-3$. Thus, $k=3l-3$ and the theorem's conclusion holds.
\end{proof}

\begin{example}\label{ex2.2}
Let $q=8$ in Theorem \ref{ko53even}. Suppose $\alpha$ is a root of the primitive polynomial $x^3+x+1$ over $\mathbb{F}_{2}$. Then $\mathbb{F}_8=\{0, 1, \alpha, \alpha+1, \alpha^2, \alpha^2+1, \alpha^2+\alpha, \alpha^2+\alpha+1\}$. Now take the following seven $s$-Pasch configurations $\mathcal{P}_m=\{[\boldsymbol v_1^{(m)}],[\boldsymbol v_2^{(m)}],[\boldsymbol v_3^{(m)}],[\boldsymbol v_1^{(m)}+\boldsymbol v_2^{(m)}],$ $[\boldsymbol v_1^{(m)}+\boldsymbol v_3^{(m)}],[\boldsymbol v_2^{(m)}+\boldsymbol v_3^{(m)}]\}$ for $m\in\mathbb{F}_{8}^*$ in $PG(2,8)$.
{\small
\begin{align*}
\mathcal{P}_1\qquad\ ={} &\{[(1,0,0)],[(0,1,0)],[(0,0,1)],[(1,1,0)],[(1,0,1)],[(0,1,1)]\};\\
\mathcal{P}_\alpha\qquad\ ={} &\{[(1, \alpha, \alpha)],[(\alpha^2, \alpha , \alpha^2)],[(\alpha+1, \alpha+1, \alpha^2)],\\
{} &[(\alpha^2+1, 0, \alpha^2+\alpha)],[(\alpha, 1, \alpha^2+\alpha)],[(\alpha^2+\alpha+1, 1, 0)]\};\\
\mathcal{P}_{\alpha + 1}\quad\ ={} &\{[(1, \alpha + 1, \alpha + 1)],[(\alpha^2+1, \alpha + 1, \alpha^2+1)],[(\alpha^2, \alpha^2, \alpha^2+1)],\\
{} &[(\alpha^2, 0, \alpha^2 + \alpha)],[(\alpha^2+1, \alpha^2 + \alpha+1, \alpha^2 + \alpha)],[(1, \alpha^2 + \alpha+1, 0)]\};\\
\mathcal{P}_{\alpha^2}\qquad={} &\{[(1, \alpha^2, \alpha^2)],[(\alpha^2+\alpha, \alpha ^ 2, \alpha^2+\alpha)],[(\alpha^2 + 1, \alpha^2 + 1, \alpha^2+\alpha)],\\
{} &[(\alpha^2 + \alpha+1, 0, \alpha)],[(\alpha^2, 1, \alpha)],[(\alpha + 1,1,0)]\};\\
\mathcal{P}_{\alpha^2+1}\quad={} &\{[(1, \alpha^2 + 1, \alpha^2 + 1)],[(\alpha^2 +\alpha + 1, \alpha^2 + 1, \alpha^2 +\alpha + 1)],[(\alpha^2 +\alpha, \alpha^2 +\alpha,\\
{} & \alpha^2 +\alpha + 1) ],[(\alpha^2 +\alpha, 0, \alpha)], [(\alpha^2 +\alpha + 1, \alpha + 1, \alpha)],[(1, \alpha+1, 0)]\};\\
\mathcal{P}_{\alpha^2+ \alpha}\quad={} &\{[(1, \alpha^2 +\alpha, \alpha^2 +\alpha)],[(\alpha, \alpha^2 +\alpha, \alpha)],[(\alpha^2 +\alpha + 1, \alpha^2 +\alpha + 1, \alpha) ],\\
{} &[(\alpha + 1, 0, \alpha^2) ],[(\alpha^2 +\alpha, 1, \alpha^2)],[(\alpha^2+1, 1, 0)]\};\\
\mathcal{P}_{\alpha^2 +\alpha + 1}={} &\{[(1, \alpha^2 +\alpha + 1, \alpha^2 +\alpha + 1)],[(\alpha+1, \alpha^2 +\alpha + 1, \alpha+1)],[(\alpha, \alpha, \alpha+1)],\\
{} &[(\alpha,0,\alpha^2)],[(\alpha+1, \alpha^2 + 1, \alpha^2) ],[(1, \alpha^2 + 1, 0)]\}.
\end{align*}
}Note that these seven $s$-Pasch configurations are pairwise disjoint and the point $[(\alpha^2+1, 0, \alpha^2+\alpha)]$ in $\mathcal{P}_\alpha$ is on the line $\langle (1,0,0),(0,0,1)\rangle$ in $\mathcal{L}_1$. In addition to $\mathcal{P}_1$ and $\mathcal{P}_\alpha$, select $l-2$ more from the remaining five $s$-Pasch configurations for $2\leq l\leq7$. Thus by Theorem \ref{ko53even}, we can obtain the existence of a $8$-ary $k$-optimal $(4l,3l-3,5;3)$-LRC with $l$ disjoint repair sets, where $2\leq l\leq7$.
\end{example}

\section{$\boldsymbol{k}$-optimal LRCs with $\boldsymbol{d=6}$ and general $\boldsymbol{r}$}

In this section, we formulate our construction within the framework of partial $r$-spreads. For the sake of readability, we first recapitulate the construction of partial $r$-spread presented in \cite{EV} as follows.

Let $u\equiv z\pmod r$, $0\leq z< r$, $m=r+z$ and $u>m$. We may assume without loss of generality that $\alpha$ is a primitive element of $\mathbb{F}_{q^{u-m}}$ and $\beta$ is a primitive element of $\mathbb{F}_{q^{m}}$. Define $t=\frac{q^{u-m}-1}{{q}^r-1}$ and $\gamma=\alpha^t$. It follows that $\gamma$ is a primitive element of $\mathbb{F}_{q^{r}}$. Furthermore, we represent the vectors in $\mathbb{F}_{q}^u$ as
$$\mathbb{F}_{q}^u=\{(x,y)\mid x\in \mathbb{F}_{q^{u-m}},y\in\mathbb{F}_{q^m}\}.$$
Suppose that
\begin{equation}
\begin{cases}\label{S}
R_1=\mathrm{Span}\{(0,\beta^0), (0,\beta^1), \ldots, (0,\beta^{r-1})\};\\
R_{i+2}=\mathrm{Span}\{(\alpha^i,0), (\alpha^i\gamma,0), \ldots, (\alpha^i\gamma^{r-1},0)\};\\
R_{a}=\mathrm{Span}\{(\alpha^i,\beta^j), (\alpha^i\gamma,\beta^{j+1}), \ldots, (\alpha^i\gamma^{r-1},\beta^{j+r-1})\},
\end{cases}
\end{equation}
where $a=(t+1)+(j+1)+i(q^m-1)$, $0\leq i\leq t-1$ and $0\leq j\leq q^m-2$. Then all such subspaces $R_{i}$ form a partial $r$-spread $\mathcal{S}$ with cardinality
$$|\mathcal{S}|=tq^m+1=\frac{q^u-q^r(q^z-1)-1}{q^r-1}\triangleq s.$$
In particular, if $z=0$, then $\mathcal{S}$ is an $r$-spread and $s=\frac{q^u-1}{q^r-1}$.

Based on this construction, we present a method for constructing $k$-optimal LRCs.

\begin{theorem}\label{ko6r}
Let $r\geq 2$, $u\equiv z\pmod r$ and $u\geq 2r$, then there exists a $q$-ary $k$-optimal $(n=(r+1)l,rl-u,6;r)$-LRC with $l$ disjoint repair sets, where $\max\{2,\frac{2(q^{u-1}-1)}{r(r+1)(q-1)}\}< l\leq s$.
\end{theorem}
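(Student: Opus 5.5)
The plan is to imitate the proof of Theorem \ref{v-lrc53}, using the subspaces of the partial $r$-spread $\mathcal{S}$ from \eqref{S} as the local blocks.

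\textbf{Construction.} Pick $l$ members $W_1,\dots,W_l$ of $\mathcal{S}$, possible since $l\le s$. Fix a basis $\boldsymbol w^{(i)}_1,\dots,\boldsymbol w^{(i)}_r$ of each $W_i$. Build $H$ in the displayed form with $\boldsymbol v^{(i)}_0=\boldsymbol 0$ and $\boldsymbol v^{(i)}_j=\boldsymbol w^{(i)}_j$ for $j\in[r]$, so that $\boldsymbol h_{i,j}=(\boldsymbol e_i;\boldsymbol v^{(i)}_j)$. The code $\mathcal{C}$ with parity-check matrix $H$ then has length $n=(r+1)l$, locality $r$, disjoint repair sets, and $k\ge n-l-u=rl-u$.

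\textbf{Step 1: $d\ge 6$.} By Lemma \ref{hd}, it suffices to show that any five columns are independent. Suppose $\sum a_{i,j}\boldsymbol h_{i,j}=\boldsymbol 0$ with at most five nonzero coefficients.
\begin{enumerate}[(a)]
\item The $H_1$ rows force $\sum_j a_{i,j}=0$ for each block $i$. Hence every block that is used contains at least two nonzero coefficients, so at most two blocks are involved.
\item The $H_2$ part then reads $\boldsymbol x_1+\boldsymbol x_2=\boldsymbol 0$ with $\boldsymbol x_1\in W_{i_1}$ and $\boldsymbol x_2\in W_{i_2}$. Since $W_{i_1}\cap W_{i_2}=\{\boldsymbol 0\}$, both $\boldsymbol x_1$ and $\boldsymbol x_2$ are zero.
\item Within a block, $\sum_{j\ge1}a_{i,j}\boldsymbol w^{(i)}_j=\boldsymbol 0$ gives $a_{i,j}=0$ for $j\ge1$ by linear independence of the basis. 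The sum condition then gives $a_{i,0}=0$.
\end{enumerate}
This argument is routine.

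\textbf{Step 2: $d=6$, which I expect to be the main obstacle.} We need six dependent columns, that is, three blocks and nonzero vectors $\boldsymbol x\in W_{i_1}$, $\boldsymbol y\in W_{i_2}$, $\boldsymbol z\in W_{i_3}$ with $\boldsymbol x+\boldsymbol y+\boldsymbol z=\boldsymbol 0$. Each of these vectors must be realizable from two columns of its block, for example as a multiple of a single basis vector (paired with the zero column). Here I would use the explicit spread \eqref{S}:
\begin{itemize}
\item $R_1$ contains $(0,1)=(0,\beta^0)$;
\item $R_2$ (the case $i=0$) contains $(1,0)$;
\item $R_{t+2}$ (the case $i=0$, $j=0$) contains $(1,1)$.
\end{itemize}
Because $l>2$, i.e.\ $l\ge 3$, we may include these three subspaces among $W_1,W_2,W_3$ and take these vectors as first basis vectors. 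Then
$$\boldsymbol h_{1,1}-\boldsymbol h_{1,0}+\boldsymbol h_{2,1}-\boldsymbol h_{2,0}-\boldsymbol h_{3,1}+\boldsymbol h_{3,0}=\boldsymbol 0$$
is a dependency among six columns, so $d=6$.

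\textbf{Step 3: optimality.} By \eqref{6} in Lemma \ref{pb}, with $\frac{rn}{2}(q-1)+1=\frac{r(r+1)l(q-1)}{2}+1$, it suffices to show
$$q^{u-1}<\frac{r(r+1)l(q-1)}{2}+1\le q^u.$$
Then $k\le rl-u$, and together with $k\ge rl-u$ this gives equality.
\begin{itemize}
\item The left inequality is exactly the hypothesis $l>\frac{2(q^{u-1}-1)}{r(r+1)(q-1)}$.
\item For the right inequality, use $l\le s\le\frac{q^u-1}{q^r-1}$. It then suffices that $r(r+1)(q-1)\le 2(q^r-1)$, i.e.\ $1+2+\dots+r\le 1+q+\dots+q^{r-1}$. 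This holds termwise because $q^{i}\ge i+1$ for $q\ge2$.
\end{itemize}
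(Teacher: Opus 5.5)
Your proposal is correct and follows essentially the same route as the paper: the same parity-check matrix built from bases of the partial $r$-spread \eqref{S}, the same six-column dependency coming from $(0,1)+(1,0)=(1,1)$ in $R_1$, $R_2$, $R_{t+2}$, and the same evaluation of bound \eqref{6}. Your counting argument for $d\ge 6$ (each block that is used needs at least two nonzero coefficients) and your termwise proof of $r(r+1)(q-1)\le 2(q^r-1)$ are more explicit than the paper's. Your non-strict inequality $\frac{r(r+1)l(q-1)}{2}+1\le q^u$ is also the right form: equality holds for $q=r=2$, $z=0$, $l=\frac{2^u-1}{3}$, where the paper's strict version fails.
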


\begin{proof}
Since $u\equiv z\pmod r$, according to Lemma \ref{pspreadbound}, there exists a partial $r$-spread $\mathcal{S}=\{R_1,R_2,\ldots,R_l\}$, where $\max\{2,\frac{2(q^{u-1}-1)}{r(r+1)(q-1)}\}< l\leq s$ and each $R_i$ is an $r$-subspace of $\mathbb{F}_{q}^u$. By the definition of the partial $r$-spread, we know that $R_i\cap R_j=\{\boldsymbol 0\}$ for $1\leq i\neq j\leq l$. For each $R_i$, $i\in [l]$, we denote its basis which is shown in equation \eqref{S} as $r$ non-zero vectors $\boldsymbol v_{1}^{(i)}, \boldsymbol v_{2}^{(i)}, \ldots, \boldsymbol v_{r}^{(i)}\in\mathbb{F}_{q}^u$. Then we can obtain the following matrix
\begin{equation*}
H=
\left(\begin{array}{cccc|cccc|c|cccc}
1 & 1 & \cdots & 1 & 0 & 0 & \cdots & 0 & \cdots & 0 & 0 & \cdots & 0\\
0 & 0 & \cdots & 0 & 1 & 1 & \cdots & 1 & \cdots & 0 & 0 & \cdots & 0\\
\vdots & \vdots & \ddots & \vdots & \vdots & \vdots & \ddots & \vdots & \cdots & \vdots & \vdots & \ddots & \vdots \\
0 & 0 & \cdots & 0 & 0 & 0 & \cdots & 0 & \cdots & 1 & 1 & \cdots & 1\\
\hline
\boldsymbol 0 & \boldsymbol v_{1}^{(1)} & \cdots & \boldsymbol v_{r}^{(1)} & \boldsymbol 0 & \boldsymbol v_{1}^{(2)} & \cdots & \boldsymbol v_{r}^{(2)} & \cdots & \boldsymbol 0 & \boldsymbol v_{1}^{(l)} & \cdots & \boldsymbol v_{r}^{(l)} \\
\end{array}\right).
\end{equation*}
Let $\mathcal{C}$ be the $q$-ary linear code with parity-check matrix $H$. Then it is easy to see that $\mathcal{C}$ has locality $r$, code length $n=(r+1)l$ and dimension $k\geq n-l-u=rl-u$.

Now express $H$ in column vector form
$$H=(\boldsymbol{h}_{1,0},\boldsymbol{h}_{1,1},\cdots,\boldsymbol {h}_{1,r},\cdots,\boldsymbol{h}_{l,0},\boldsymbol{h}_{l,1},\cdots,\boldsymbol {h}_{l,r}),$$
where $\boldsymbol{h}_{i,j}=\left(
 \begin{array}{c}
 \boldsymbol{e}_{i} \\
 \boldsymbol v_{j}^{(i)} \\
 \end{array}
 \right)\in \mathbb{F}_{q}^{l+u}$, $\boldsymbol v_{0}^{(i)}=\boldsymbol 0 \in \mathbb{F}_{q}^{u}$ for any $i\in [l]$, $0\leq j \leq r$. Since $\boldsymbol v_{1}^{(1)}=(0,\beta^0)$, $\boldsymbol v_{1}^{(2)}=(\alpha^0,0)$, $\boldsymbol v_{1}^{(t+2)}=(\alpha^0,\beta^0)$, we have $\boldsymbol v_{1}^{(1)}+\boldsymbol v_{1}^{(2)}=\boldsymbol v_{1}^{(t+2)}$. From $u\geq 2r$, it can be deduced that $s\geq q^r+1 >3$. Note that $l\geq3$, hence we can always fix the subspaces $R_1$, $R_2$ and $R_{t+2}$ in $\mathcal{S}$. It follows that there always exist six columns $\boldsymbol{h}_{1,0}, \boldsymbol{h}_{1,1}, \boldsymbol{h}_{2,0}, \boldsymbol{h}_{2,1}, \boldsymbol{h}_{t+2,0}, \boldsymbol{h}_{t+2,1}$ in $H$ that are linearly dependent. Thus, by Lemma \ref{hd}, $d\leq 6$.

Note that no column of $H$ can be linearly represented by columns from other repair sets. Thus to prove that any five columns in $H$ are linearly independent, it suffices to consider five columns taken from at most two distinct repair sets. Without loss of generality, we assume these five columns are selected from the first two repair sets. We then divide the discussion into two cases based on how these five columns are selected, and proceed by contradiction.

$(1)$ If there exist five linearly dependent columns in the first repair set, we can deduce that $\boldsymbol v_{1}^{(1)}, \boldsymbol v_{2}^{(1)}, \ldots, \boldsymbol v_{r}^{(1)}$ are linearly dependent. This contradicts the fact that $\boldsymbol v_{1}^{(1)}, \boldsymbol v_{2}^{(1)}, \ldots, \boldsymbol v_{r}^{(1)}$ form a basis of $R_1$.

$(2)$ If there exist five linearly dependent columns drawn from the first and the second repair sets, we can conclude that some nonzero linear combination of $\boldsymbol v_{1}^{(1)}, \boldsymbol v_{2}^{(1)}, \ldots, \boldsymbol v_{r}^{(1)}$ is equal to some nonzero linear combination of $\boldsymbol v_{1}^{(2)}, \boldsymbol v_{2}^{(2)}, \ldots, \boldsymbol v_{r}^{(2)}$. This contradicts the condition $R_1\cap R_2=\{\boldsymbol 0\}$.

Therefore, since any five columns in $H$ are linearly independent, it follows from Lemma \ref{hd} that $d\geq6$, and  we thus obtain $d=6$.

From inequality \eqref{6} in Lemma \ref{pb}, the dimension satisfies
$$k\leq rl - \left\lceil \log_q\left(\frac{r(r+1)l}{2}(q-1)+1\right)\right\rceil.$$
Since $\max\{2,\frac{2(q^{u-1}-1)}{r(r+1)(q-1)}\}< l\leq s$ and $\frac{r(r+1)(q-1)}{2(q^r-1)}\leq 1$, we can calculate that
$$q^{u-1} < \frac{r(r+1)l}{2}(q-1)+1<q^u.$$
That is, $k \leq rl - u$. Thus, the dimension of the code $\mathcal{C}$ is $rl - u$.
\end{proof}

\begin{remark}\label{rem1}
Since $u$ can be arbitrarily large, the code length $n$ of the $k$-optimal LRCs constructed in Theorem \ref{ko6r} is unbounded.
\end{remark}

\begin{remark}\label{rem2}
For fixed $q$, $r$, and $\max\{2,\frac{2(q^{u-1}-1)}{r(r+1)(q-1)}\}< l\leq s$, we derive
\begin{align*}
\lim_{u\rightarrow\infty}\frac{k}{n}&=\lim_{u\rightarrow\infty}\frac{rl-u}{(r+1)l}\\
&=\lim_{u\rightarrow\infty}\left(\frac{r}{r+1}-\frac{u}{(r+1)l}\right)\\
&=\frac{r}{r+1}.
\end{align*}
Therefore, the code rates of these $k$-optimal LRCs constructed in Theorem \ref{ko6r} are asymptotically optimal.
\end{remark}

\begin{remark}\label{rem3}
For $z=0$, the $k$-optimal LRCs with the same parameters are also constructed in \cite{FCX} via $q$-Steiner systems. Specifically, when $q=r=2$ and $l=\frac{2^u-1}{3}$, this corresponds to a binary $k$-optimal $(3l,2l-u,6;2)$-LRC with $l$ disjoint repair sets. An LRC with the same parameters was also constructed in \cite{FCX} using difference set and in \cite{GC} using cyclic codes.
\end{remark}

\begin{landscape}  % 横向页面开始
\begin{table}[htbp]
%\small %字体大小的控制
\centering
\caption{$k$-Optimal LRCs with disjoint repair sets}
\label{tab:k}
\renewcommand\arraystretch{1.8}
\setlength{\tabcolsep}{7pt} % 调整列间距
\begin{tabular}{ccccccccc}
\toprule  % 顶部专业线
\textbf{No.} & \textbf{\textit{q}} & \textbf{\textit{n}} & \textbf{\textit{k}} & \textbf{\textit{d}} & \textbf{\textit{r}} & \textbf{\textit{u}} & \textbf{\textit{l}} & \textbf{Ref.} \\
\midrule  % 中部专业线

1 & $q>2$, $3\nmid q$ & $3l$ & $2l-u$ & 5 & 2 & $u\geq 2$ & $\frac{q^{u} - 1}{3(q-1)}$ & \cite{FCX} \\

\addlinespace[3pt] % 增加行间距
2 & $q\geq 5$, $2\nmid q$ & $4l$ & $3l-u$ & 5 & 3 & 3 & $\frac{q+1}{6}<l\leq q-2$ & Theorem~\ref{ko53odd} \\

\addlinespace[3pt]
3 & $q=2^t$, $t\geq 2$ & $4l$ & $3l-u$ & 5 & 3 & 3 & $\max\{1,\frac{q+1}{6}\}<l\leq q-1$ & Theorem~\ref{ko53even} \\

\midrule  % 组间分隔线
\addlinespace[5pt] % 组间额外间距

4 & 2 & $(r+1)l$ & $rl-u$ & 6 & $2^b$ & $u \geq 4b$ & $\frac{2^{u-1}-1}{2^{b-1}(2^b+1)} < l \leq A_2(u,2b,4b)$ & \cite{MG} \\[8pt]

\addlinespace[3pt] % 增加行间距
5 & 2 & $4l$ & $3l-u$ & 6 & 3 & $u \geq 6$ & $\frac{2^{u-1} - 1}{6} < l \leq A_2(u,3,6)$ & \cite{MG} \\[8pt]

\addlinespace[3pt]
6 & 2 & $4l$ & $3l-u$ & 6 & 3 & $u \geq 5$ & $\left\lfloor\frac{2^{u-2}-1}{3}\right\rfloor + 1 \leq l \leq A_2(u-1,2,4)$ & \cite{MG} \\[8pt]

\addlinespace[3pt]
7 & 2 & $12l$ & $11l-u$ & 6 & 11 & $6\mid u-1$ & $\frac{2^{u-1} - 1}{2^{6}-1}$ & \cite{ZLW} \\

\addlinespace[3pt]
8 & 3 & $4l$ & $3l-u$ & 6 & 3 & $2\mid u-1$ & $\frac{3^{u-1}-1}{3^2-1}$ & \cite{ZLW} \\

\addlinespace[3pt]
9 & 3 & $6l$ & $5l-u$ & 6 & 5 & $3\mid u-1$ & $\frac{3^{u-1}-1}{3^3-1}$ & \cite{ZLW} \\

\addlinespace[3pt]
10 & $q$ & $(r+1)l$ & $rl-u$ & 6 & $r$ & $\makecell{2\leq r< u,~r\mid u,\\ \frac{q^{u}-1}{q^{u-1}}\geq\frac{2(q^r-1)}{(r+1)r(q-1)}}$ & $\frac{q^{u} - 1}{q^{r}-1}$ & \cite{FCX} \\

\addlinespace[3pt]
11 & $q$ & $(r+1)l$ & $rl-u$ & 6 & $r$ & $\makecell{u\geq 2r,\\ u\equiv z\pmod r}$ & $\max\{2,\frac{2(q^{u-1}-1)}{r(r+1)(q-1)}\}< l\leq \frac{q^u-q^r(q^z-1)-1}{q^r-1}$ & Theorem~\ref{ko6r} \\

\addlinespace[2pt]
\bottomrule  % 底部专业线
\end{tabular}
\end{table}
\end{landscape}  % 横向页面结束

\begin{example}\label{ex3.1}
Let $q=r=3$ in Theorem \ref{ko6r}.

(1) If $u\equiv 0\pmod 3$ and $u\geq6$, then there exists a ternary $k$-optimal $(4l,3l-u,6;3)$-LRC with $l$ disjoint repair sets, where $\frac{3^{u-1}-1}{12}< l\leq \frac{3^{u}-1}{26}$. This includes Example 5 in \textup{\cite{FCX}}, which corresponds to $u=6$ and $l=28$.

(2) If $u\equiv 1\pmod 3$ and $u\geq7$, then there exists a ternary $k$-optimal $(4l,3l-u,6;3)$-LRC with $l$ disjoint repair sets, where $\frac{3^{u-1}-1}{12}< l\leq \frac{3^{u}-55}{26}$.

(3) If $u\equiv 2\pmod 3$ and $u\geq8$, then there exists a ternary $k$-optimal $(4l,3l-u,6;3)$-LRC with $l$ disjoint repair sets, where $\frac{3^{u-1}-1}{12}< l\leq \frac{3^{u}-217}{26}$.
\end{example}

\section{Conclusion}

We summarize the existing known results and our conclusions for $k$-optimal LRCs with disjoint repair sets in Table \ref{tab:k}.

In this paper, we investigate locally repairable codes (LRCs) with disjoint repair sets via the parity-check matrix method. First, we generalize the Pasch configuration to propose the novel concept of an $s$-Pasch configuration. We then derive a geometric characterization for the existence of LRCs with $d=5$ and $r=3$. Subsequently, we construct several $k$-optimal LRCs by employing the point-line relationship in $PG(2,q)$. It is of great interest to find more pairwise disjoint $s$-Pasch configurations in $PG(u-1,q)$ to construct $k$-optimal LRCs with a greater number of disjoint repair sets.

Moreover, a family of $q$-ary $k$-optimal LRCs with $d=6$ and general $r$ is constructed by using partial $r$-spreads. It is noteworthy that both the range of possible code lengths and the number of repair sets for such LRCs are expanded compared to the results presented by Fang et al. in \cite{FCX}. An open question remains: whether additional geometric structures can be developed to construct $k$-optimal LRCs with larger $d$. This direction warrants further exploration.

\vspace{0.2cm}

\noindent {\bf Acknowledgements}

This study is supported by Science Research Project of Hebei Education Department under Grant JCZX2026032, Natural Science Foundation of Hebei Province under Grant A2025205023 and A2023205045.

\appendix
\renewcommand{\thesection}{\Alph{section}}
\titleformat{\section}{\Large\bfseries}{Appendix \thesection}{1em}{}

\section{The partial proof of Theorem \ref{ko53odd}}\label{aa}

In the following, we verify that for all $m,a\in\mathbb{F}_{q}\backslash\{0,1,-1\}$ with $m\neq a$, $\mathcal{P}_m$ and $\mathcal{P}_{a}$ are disjoint.

Note that the three lines $M_1,M_2$ and $M_3$ intersect pairwise at three distinct points, and all remaining points on these lines are mutually distinct. Then it follows that $\{[\boldsymbol v_1^{(m)}],[\boldsymbol v_2^{(m)}],[\boldsymbol v_3^{(m)}]\} \cap \{[\boldsymbol v_1^{(a)}],[\boldsymbol v_2^{(a)}],[\boldsymbol v_3^{(a)}]\}=\emptyset$.

It is evident from $a\neq0$ that the coefficient of $\boldsymbol v_2^{(1)}$ in the representative vectors of all five points in $\mathcal{P}_{a}$ other than $[\boldsymbol v_1^{(a)}-\boldsymbol v_2^{(a)}]$ is nonzero. Therefore, $[\boldsymbol v_1^{(m)}-\boldsymbol v_2^{(m)}]\not\in \mathcal{P}_{a}\backslash\{[\boldsymbol v_1^{(a)}-\boldsymbol v_2^{(a)}]\}$. Since $m,a\in\mathbb{F}_{q}\backslash\{0,1,-1\}$ and $m\neq a$,
it holds
$$\frac{1-m^2}{1-a^2}\neq\frac{m^2-m}{a^2-a}.$$
This implies that the coefficients of $\boldsymbol v_1^{(1)}$ and $\boldsymbol v_3^{(1)}$ in the representative vectors $\boldsymbol v_1^{(m)}-\boldsymbol v_2^{(m)}$ and $\boldsymbol v_1^{(a)}-\boldsymbol v_2^{(a)}$ are not proportional, then we have $[\boldsymbol v_1^{(m)}-\boldsymbol v_2^{(m)}]\neq[\boldsymbol v_1^{(a)}-\boldsymbol v_2^{(a)}]$. Thus, $[\boldsymbol v_1^{(m)}-\boldsymbol v_2^{(m)}]\not\in \mathcal{P}_{a}$. Similarly, we obtain that $[\boldsymbol v_1^{(a)}-\boldsymbol v_2^{(a)}]\not\in\mathcal{P}_m$.

Since the coefficients of any two of $\boldsymbol v_1^{(1)}$, $\boldsymbol v_2^{(1)}$ and $\boldsymbol v_3^{(1)}$ in the representative vector $\boldsymbol v_1^{(m)}-\boldsymbol v_3^{(m)}$ are not opposite numbers, it follows that $[\boldsymbol v_1^{(m)}-\boldsymbol v_3^{(m)}]\not\in\{[\boldsymbol v_1^{(a)}],[\boldsymbol v_2^{(a)}],[\boldsymbol v_3^{(a)}]\}$. From the choice of $m$ and $a$, it follows that $[\boldsymbol v_1^{(m)}-\boldsymbol v_3^{(m)}]\not\in\{[\boldsymbol v_1^{(a)}-\boldsymbol v_2^{(a)}],[\boldsymbol v_1^{(a)}-\boldsymbol v_3^{(a)}],[\boldsymbol v_2^{(a)}-\boldsymbol v_3^{(a)}]\}$. Thus, $[\boldsymbol v_1^{(m)}-\boldsymbol v_3^{(m)}]\not\in \mathcal{P}_{a}$. By the same token, we obtain $[\boldsymbol v_1^{(a)}-\boldsymbol v_3^{(a)}]\not\in\mathcal{P}_m$.

In a similar manner, it can be verified that $[\boldsymbol v_2^{(m)}-\boldsymbol v_3^{(m)}]\not\in\mathcal{P}_a$ and  $[\boldsymbol v_2^{(a)}-\boldsymbol v_3^{(a)}]\not\in\mathcal{P}_m$.

Based on the above analysis, it follows that for all $m,a\in\mathbb{F}_{q}\backslash\{0,1,-1\}$ with $m\neq a$, $\mathcal{P}_m$ and $\mathcal{P}_{a}$ are disjoint.

\section{The partial proof of Theorem \ref{ko53even}}\label{ab}

In the following, we verify that for all $m,a\in\mathbb{F}_{q}\backslash\{0,1\}$ with $m\neq a$, $\mathcal{P}_m$ and $\mathcal{P}_{a}$ are disjoint.

Note that the three lines $M_1,M_2$ and $M_3$ intersect at a single point, and all remaining points on these lines are mutually distinct. This implies $\{[\boldsymbol v_1^{(m)}],[\boldsymbol v_2^{(m)}],[\boldsymbol v_3^{(m)}]\}\cap \{[\boldsymbol v_1^{(a)}],[\boldsymbol v_2^{(a)}],[\boldsymbol v_3^{(a)}]\}=\emptyset$.

It is evident from $a\neq0,1$ that the coefficient of $\boldsymbol v_2^{(1)}$ in the representative vectors of all five points in $\mathcal{P}_{a}$ other than $[\boldsymbol v_1^{(a)}+\boldsymbol v_2^{(a)}]$ is nonzero. Therefore, $[\boldsymbol v_1^{(m)}+\boldsymbol v_2^{(m)}]\not\in\mathcal{P}_{a}\backslash\{[\boldsymbol v_1^{(a)}+\boldsymbol v_2^{(a)}]\}$. Since $m,a\in\mathbb{F}_{q}\backslash\{0,1\}$ and $m\neq a$, it holds
$$\frac{m^2+1}{a^2+1}\neq\frac{m^2+m}{a^2+a}.$$
This implies that the coefficients of $\boldsymbol v_1^{(1)}$ and $\boldsymbol v_3^{(1)}$ in the representative vectors $\boldsymbol v_1^{(m)}+\boldsymbol v_2^{(m)}$ and $\boldsymbol v_1^{(a)}+\boldsymbol v_2^{(a)}$ are not proportional, then we have $[\boldsymbol v_1^{(m)}+\boldsymbol v_2^{(m)}]\neq[\boldsymbol v_1^{(a)}+\boldsymbol v_2^{(a)}]$. Thus, $[\boldsymbol v_1^{(m)}+\boldsymbol v_2^{(m)}]\not\in \mathcal{P}_{a}$. Similarly, we obtain that $[\boldsymbol v_1^{(a)}+\boldsymbol v_2^{(a)}]\not\in\mathcal{P}_m$.

Moreover, it can be similarly verified that $[\boldsymbol v_2^{(m)}+\boldsymbol v_3^{(m)}]\not\in\mathcal{P}_{a}$ and $[\boldsymbol v_2^{(a)}+\boldsymbol v_3^{(a)}]\not\in\mathcal{P}_{m}$.

Since the coefficients of any two of $\boldsymbol v_1^{(1)}$, $\boldsymbol v_2^{(1)}$ and $\boldsymbol v_3^{(1)}$ in the representative vector $\boldsymbol v_1^{(m)}+\boldsymbol v_3^{(m)}$ are distinct, it follows that $[\boldsymbol v_1^{(m)}+\boldsymbol v_3^{(m)}]\not\in\{[\boldsymbol v_1^{(a)}],[\boldsymbol v_2^{(a)}],[\boldsymbol v_3^{(a)}]\}$. From the choice of $m$ and $a$, it follows that $[\boldsymbol v_1^{(m)}+\boldsymbol v_3^{(m)}]\not\in\{[\boldsymbol v_1^{(a)}+\boldsymbol v_2^{(a)}],[\boldsymbol v_1^{(a)}+\boldsymbol v_3^{(a)}],[\boldsymbol v_2^{(a)}+\boldsymbol v_3^{(a)}]\}$. Thus, $[\boldsymbol v_1^{(m)}+\boldsymbol v_3^{(m)}]\not\in \mathcal{P}_{a}$. By the same token, we obtain $[\boldsymbol v_1^{(a)}+\boldsymbol v_3^{(a)}]\not\in\mathcal{P}_m$.

Based on the above analysis, it follows that for all $m,a\in\mathbb{F}_{q}\backslash\{0,1\}$ with $m\neq a$, $\mathcal{P}_m$ and $\mathcal{P}_{a}$ are disjoint.


\begin{thebibliography}{99}

\bibitem{B}
Bu T.: Partitions of a vector space. Discret. Math. \textbf{31}(1), 79-83 (1980).

\bibitem{CM}
Cadambe V.R., Mazumdar A.: Bounds on the size of locally recoverable codes. IEEE Trans. Inf. Theory \textbf{61}(11), 5787-5794 (2015).

\bibitem{CCFT}
Cai H., Cheng M., Fan C., Tang X.: Optimal locally repairable systematic codes based on packings. IEEE Trans. Commun. \textbf{67}(1), 39-49 (2019).

\bibitem{CMST}
Cai H., Miao Y., Schwartz M., Tang X.: On optimal locally repairable codes with multiple disjoint repair sets. IEEE Trans. Inf. Theory \textbf{66}(4), 2402-2416 (2019).

\bibitem{CFXH}
Chen B., Fang W., Xia S., Hao J., Fu F.: Improved bounds and Singleton-optimal constructions of locally repairable codes with minimum distance 5 and 6. IEEE Trans. Inf. Theory \textbf{67}(1), 217-231 (2021).

\bibitem{CD}
Colbourn C.J., Dinitz J.H.: Handbook of combinatorial designs, 2nd Edn (Discrete Mathematics and Its Applications). Chapman \& Hall/CRC (2007).

\bibitem{EJSS}
El-Zanati S., Jordon H., Seelinger G., Sissokho P., Spence L.: The maximum size of a partial 3-spread in a finite vector space over GF(2). Des. Codes Cryptogr. \textbf{54}(2), 101-107 (2010).

\bibitem{EV}
Etzion T., Vardy A.: Error-correcting codes in projective space. IEEE Trans. Inf. Theory \textbf{57}(2), 1165-1173 (2011).

%\bibitem{ER}
%Etzion T, Raviv N. Equidistant codes in the Grassmannian. Discret. Appl. Math. 186,87-97 (2015).

\bibitem{FCX}
Fang W., Chen B., Xia S., Fu F., Chen X.: Perfect LRCs and $k$-optimal LRCs. Des. Codes Cryptogr. \textbf{91}(4), 1209-1232 (2023).

\bibitem{FF}
Fang W., Fu F.: Optimal cyclic $(r, \delta)$ locally repairable codes with unbounded length. Finite Fields Appl. \textbf{63}, 101650, (2020).

\bibitem{FFCX}
Fang W., Fu F., Chen B., Xia S.: Singleton-optimal LRCs and perfect LRCs via cyclic and constacyclic codes. Finite Fields Appl. \textbf{91}, 102273 (2023).

\bibitem{FTFC}
Fang W., Tao R., Fu F., Chen B., Xia S.: Bounds and constructions of singleton-optimal locally repairable codes with small localities. IEEE Trans. Inf. Theory \textbf{70}(10), 6842-6856 (2024).

\bibitem{GC}
Goparaju S., Calderbank R.: Binary cyclic codes that are locally repairable. In: 2014 IEEE International Symposium on Information Theory (ISIT), pp. 676-680 (2014).

\bibitem{GHSY}
Gopalan P., Huang C., Simitci H., Yekhanin S.: On the locality of codeword symbols. IEEE Trans. Inf. Theory \textbf{58}(11), 6925-6934 (2012).

\bibitem{GXY}
Guruswami V., Xing C., Yuan C.: How long can optimal locally repairable codes be? IEEE Trans. Inf. Theory \textbf{65}(6), 3662-3670 (2019).

\bibitem{HCL}
Huang C., Chen M., Li J.: Pyramid codes: flexible schemes to trade space for access efficiency in reliable data storage systems. In: sixth IEEE International Symposium on Network Computing and Applications (NCA), pp. 79-86 (2007).

\bibitem{HHCY}
Han X., Han G., Cai H., Yin L.: Locally repairable codes with multiple repair sets based on packings of block size 4. Cryptogr. Commun. \textbf{16}, 459-479 (2023).

\bibitem{HXSC}
Hao J., Xia S., Shum K.W., Chen B., Fu F., Yang Y.: Bounds and constructions of locally repairable codes: Parity-check matrix approach. IEEE Trans. Inf. Theory \textbf{66}(12), 7465-7474 (2020).

\bibitem{JC}
Jiang J., Cheng M.: Regular $(k, R, 1)$-packings with max$(R)= 3$ and their locally repairable codes. Cryptogr. Commun. \textbf{12}, 1071-1089 (2020).

\bibitem{JMX}
Jin L., Ma L., Xing C.: Construction of optimal locally repairable codes via automorphism groups of rational function fields. IEEE Trans. Inf. Theory \textbf{66}(1), 210-221 (2020).

\bibitem{KN1}
Kim C., No J.S.: New constructions of binary and ternary locally repairable codes using cyclic codes. IEEE Commun. Lett. \textbf{22}(2), 228-231 (2017).

\bibitem{KN2}
Kim C., No J.S.: New constructions of binary LRCs with disjoint repair groups and locality 3 using existing LRCs. IEEE Commun. Lett. \textbf{23}(3), 406-409 (2019).

\bibitem{KWG}
Kong X., Wang X., Ge G.: New constructions of optimal locally repairable codes with super-linear length. IEEE Trans. Inf. Theory \textbf{67}(10), 6491-6506 (2021).

\bibitem{LMX}
Li X., Ma L., Xing C.: Optimal locally repairable codes via elliptic curves. IEEE Trans. Inf. Theory \textbf{65}(1), 108-117 (2019).

\bibitem{LX}
Ling S., Xing C.: Coding Theory: A First Course. Cambridge: Cambridge University Press (2004).

\bibitem{LXY}
Luo Y., Xing C., Yuan C.: Optimal locally repairable codes of distance 3 and 4 via cyclic codes. IEEE Trans. Inf. Theory \textbf{65}(2), 1048-1053 (2019).

\bibitem{LYRF}
Li R., Yang S., Rao Y., Fu Q.: On binary locally repairable codes with distance four. Finite Fields Appl. \textbf{72}, 101793 (2021).

\bibitem{MG}
Ma J., Ge G.: Optimal binary linear locally repairable codes with disjoint repair groups. SIAM J. Discret. Math. \textbf{33}(4), 2509-2529 (2019).

\bibitem{PKLK}
Prakash N., Kamath G.M., Lalitha V., Kumar P.V.: Optimal linear codes with a local-error-correction property. In: 2012 IEEE International Symposium on Information Theory (ISIT), pp. 2776-2780 (2012).

\bibitem{QFF}
Qiu J., Fang W., Fu F.: New lower bounds for the minimum distance of cyclic codes and applications to locally repairable codes. IEEE Trans. Inf. Theory \textbf{70}(7), 4968-4982 (2024).

\bibitem{QZF}
Qiu J., Zheng D., Fu F.: New constructions of optimal cyclic $(r,\delta)$ locally repairable codes from their zeros. IEEE Trans. Inf. Theory \textbf{67}(3), 1596-1608 (2021).

\bibitem{RPDV}
Rawat A.S., Papailiopoulos D.S., Dimakis A.G, Vishwanath S.: Locality and availability in distributed storage. IEEE Trans. Inf. Theory \textbf{62}(8), 4481-4493 (2016).

\bibitem{SZ}
Silberstein N., Zeh A.: Optimal binary locally repairable codes via anticodes. In: 2015 IEEE International Symposium on Information Theory (ISIT), pp. 1247-1251 (2015).

\bibitem{TB}
Tamo I., Barg A.: A family of optimal locally recoverable codes. IEEE Trans. Inf. Theory \textbf{60}(8), 4661-4676 (2014).

\bibitem{TFWF}
Tao R., Fang W., Wang Y., Fu F., Hu S.: Some new results on improved bounds and constructions of singleton-optimal $(r,\delta)$ locally repairable codes. IEEE Trans. Commun.  \textbf{73}(5), 2876-2890 (2024).

\bibitem{TZ}
Tian F., Zhu S.: Optimal cyclic locally repairable codes with unbounded length from their zeros. Discret. Math. \textbf{345}(9), 112978 (2022).

\bibitem{TZSP}
Tan P., Zhou Z., Sidorenko V., Parampalli U.: Two classes of optimal LRCs with information $(r, t)$-locality. Des. Codes Cryptogr. \textbf{88}(9), 1741-1757 (2020).

\bibitem{WZ}
Wang A., Zhang Z.: Repair locality with multiple erasure tolerance. IEEE Trans. Inf. Theory \textbf{60}(11), 6979-6987 (2014).

\bibitem{WZL}
Wang A., Zhang Z., Lin D.: Bounds for binary linear locally repairable codes via a sphere-packing approach. IEEE Trans. Inf. Theory \textbf{65}(7), 4167-4179 (2019).

\bibitem{ZK}
Zhang Y., Kan H.: Locally Repairable Codes from Combinatorial Designs. Sci. China Inf. Sci. \textbf{63}(2), 122304 (2020).

\bibitem{ZLS}
Zhao J., Li Y., Shan X.: Optimal locally repairable codes with multiple repair sets based on $2$-regular packings. Discret. Math. \textbf{348}(9), 114499 (2025).

\bibitem{ZLW}
Zhang W., Luo Y., Wang L.: Optimal binary and ternary locally repairable codes with minimum distance 6. Des. Codes Cryptogr. \textbf{92}(5), 1251-1265 (2024).

\bibitem{ZTYL}
Zhang W., Tang D., Ying C., Luo Y.: Constructions of optimal binary locally repairable codes via intersection subspaces. Sci. China Inf. Sci. \textbf{67}(6), 162306 (2024).

\end{thebibliography}
\end{document}